\theoremstyle{plain}
\newtheorem{thm}{\protect\theoremname}
\theoremstyle{definition}
\theoremstyle{definition}
\newtheorem{example}[thm]{\protect\examplename}
\theoremstyle{plain}
\providecommand{\psreset}{\psset{%
		linewidth=0.3pt,linestyle=solid,linecolor=black,
		dotsize=2.5pt,dotsep=2.5pt,arrowsize=4pt,
		fillstyle=none,fillcolor=white,
		showpoints=false,arrows=-,linearc=0,framearc=0,
		hatchsep=2pt,hatchwidth=0.2pt,nodesep=4pt,opacity=1}
	\psset{gridcolor=black!60, subgridcolor=black!30}
}
\titleformat{\section}[block]{\centering\large\bfseries\sffamily}{\thesection.}{0.5em}{}
\titleformat{\subsection}[block]{\flushleft\bfseries}{\thesubsection.}{0.5em}{}
\titleformat{\subsection}[block]{\flushleft\bfseries\sffamily}{\thesubsection.}{0.5em}{}
\titleformat{\subsubsection}[runin]{\normalsize\bfseries\sffamily}{\bfseries\upshape\sffamily\thesubsubsection.}{0.5em}{}[.--\:]
\renewcommand{\thesubsubsection}{\arabic{section}.\arabic{subsection}.\arabic{subsubsection}}
\titlespacing{\section}{0ex}{10ex}{5ex}
\titlespacing{\subsection}{0in}{6ex}{3ex}
\titlespacing{\subsubsection}{0mm}{2ex}{0.5em}
\providecommand{\abstitle}[1]{{\par\vspace*{2ex}\small\bfseries\sffamily #1}\hspace*{1ex}}
\renewenvironment{abstract}%
{\begin{center}\begin{minipage}{0.8\linewidth}%
			\abstitle{Abstract}\small}%
		{\end{minipage}\end{center}\vfill\clearpage}
\theoremstyle{remark}
  \theoremstyle{plain}
  \theoremstyle{definition}
    \newtheorem{proposition}{\protect\propositionname}\theoremstyle{definition}
  \newtheorem{definition}{\protect\definitionname}\theoremstyle{plain}
\newtheorem{theorem}{\protect\theoremname}\theoremstyle{plain}
  \theoremstyle{definition}
  \newtheorem{assumption}{\protect\assumptionname}%
  \providecommand{\assumptionname}{Assumption}
  \providecommand{\definitionname}{Definition}
  \providecommand{\lemmaname}{Lemma}
  \providecommand{\propositionname}{Proposition}
  \providecommand{\remarkname}{Remark}
\providecommand{\corollaryname}{Corollary}
\providecommand{\theoremname}{Theorem}
\providecommand{\examplename}{Example}
\providecommand{\definitionname}{Definition}
\providecommand{\examplename}{Example}
\providecommand{\lemmaname}{Lemma}
\providecommand{\theoremname}{Theorem}
\begin{document}
\title{Random Attention Span 
\thanks{I would like to thank Victor Aguiar, Maria Goltsman, Nail Kashaev, Roy Allen, Elliot Lipnowski and John Quah for useful comments and suggestions. I am grateful to Victor Aguiar, Nail Kashaev, Maria Jose Boccardi and Jeongbin Kim for providing their experimental dataset for the empirical estimation chapter in this paper.}} 

\author{ 
	Dazhuo Wei 
	}
\date{This version: April, 2024 / First Version: April, 2022}
\maketitle

\begin{abstract}
In this paper, I introduce a random attention span model (RAS) which uses stopping time to identify decision-makers' behavior under limited attention. Unlike many limited attention models, the RAS identifies preferences using time variation without any need for menu variation. In addition, the RAS allows the consideration set to be correlated with the preference. I also use the revealed preference theory that provides testable implications for observable choice probabilities. Then, I  test the model and estimate the preference distribution using data from M-Turk experiments on choice behaviors that involve lotteries; there is general alignment with the distribution results from logit attention model. 

JEL classification numbers: C50, C51, C52, C91.\\
\noindent Keywords: random utility, random consideration sets.
\end{abstract}

\section{Introduction}

The stochastic revealed preference theory, also known as the random utility model, was introduced by \cite{Mcfadden_1990}. Its aim is to explain how decision-makers behave when choosing alternatives by assuming that they maximize their utility. A major criticism is that the representation assumes individuals consider all alternatives.\footnote{For example, if only two options are available, apples and bananas, and people choose apples more often; then the RUM model with full attention would conclude that apples are preferred over bananas. However, if the model's full consideration assumption is violated, this conclusion may not be accurate as people may not have seen the bananas at all.}\footnote{\cite{RUM_Victor_2021} rejects full consideration with the experimental data consisting of lotteries; \cite{Goeree_2008} shows that limited information about a product contributes to differences in purchasing outcomes; \cite{van_2010} shows that there is considerable variation in choice shares and considerations across laundry detergent brands; \cite{Honka_2017} shows that advertising makes consumers aware of more options, search more, and find better alternatives.} This has motivated the analysis of models that do not assume full consideration. Most of these models use observed menu variation to distinguish between preference and limited attention effects on choices. Examples include models like \cite{RAM_2020} that requires menu variation and relies on monotonicity on the size of menus, and models like \cite{LA_2016}, \cite{Aguair_2017} and \cite{Manzini_Mariotti_2014} which use menu variation and rely on functional-form restrictions on the attention rules. These models have produced exciting results, but obtaining such menu variation in real-world field data can be challenging.\footnote{\cite{RUM_Victor_2021} points out that nobody has a clean menu variation setup for field data.} For one, it is costly for firms to change menus too frequently. Additionally, some major products always stay on the menu, so the preference ranking between these products cannot be obtained due to the lack of variation.

In this paper, I propose a new model called random attention span model (RAS) that can help identify preferences of individuals without any need for menu variation. Additionally, I provide an empirical approach to estimating the preference distribution and to testing the model given a data set of stochastic choices and stopping times. The RAS model can cover a wide range of nonparametric attention rules within a fixed and observed menu. This model has a richer data set that comprises the information on the time it takes individuals to make decisions that enables me to analyze the consideration rules in a nonparametric way and draw implications on the identified preferences for stochastic choices without observing individuals' exact attention. 

The availability of stopping time data sets is a key motivation of RAS. In contrast to the lack of menu variation, most online shopping platforms record stopping time along with individuals' choices. Therefore, a different approach using time variation instead of menu variation would be convenient for investigating preferences under limited attention.

The observation of stopping time is widely used in the neuroscience and psychology literature \citep{DDM_2008, Milo_2010, Krajbich_2012} and has also been introduced to the economics literature by \cite{Reutskaja_2011}. \cite{Reutskaja_2011} designs an experiment to show that limited attention is true among decision-makers by using an eye-tracking device to identify which alternatives each decision-maker sees. The study finds that people are very good at choosing the best alternative among the subset that the decision-maker really pays attention to. This finding supports the limited attention models that indicate decision-makers may have limited attention and are utility maximizing in the limited attention set.

The RAS theory is based on the assumption that individuals' preferences remain stable over time and that their attention span follows a set of rules called "time monotonicity." This set means that as individuals take more time to make decisions, the probability that their attention is concentrated entirely on any subset of the full menu weakly decreases. Essentially, the theory posits that individuals tend to explore more new items as they spend more time on the decision, without forgetting the items they have already seen. For example, if someone is shopping online and starts by looking at item 1, their consideration set is limited to that single item. As time passes, they are likely to see item 2 and other options, but they will not forget about item 1. 


The time monotonicity assumption is valid in sample models with time introduced. In a model of search and satisfaction where the satisfaction thresholds depend on time, a decision-maker will make a choice when: 1. this choice offers an above-threshold utility; 2. this choice is the first item they have ever seen which is above the satisfaction threshold. In this case, if individuals who stop later are more patient and have higher thresholds, the probability that their attention remains in a small range of considerations is smaller. Therefore, as decision-makers who stop later are more patient, the principle of time monotonicity is upheld. In a model of stochastic choice and limited considerations where the probability of any certain item being considered increases over time, the probability of considering anything beyond any given set is non-decreasing over time.\footnote{Please see \ref{sec: examples} for details.}

In this paper, I derive testable implications through linear inequalities in a homogeneous preference setting. I argue that the more time people spend making a decision, the more likely they are to choose the better option. This is because, as they spend more time, they are more likely to see the preferred items. As a result, decision-makers tend to choose the most preferred items with more time, leaving the probability of selecting the less preferred items to be less likely.

When dealing with heterogeneous preferences, the paper shows that the RAS representation is equivalent to having an equality link between attention rules, a preference distribution, and a stochastic choice data set. This equality means that researchers can estimate the preference distribution by minimizing the distance between the data and the model by using a simulation on the attention rule matrix that follows time monotonicity.

In this paper, I tested RAS with experimental data designed and collected by \cite{RUM_Victor_2021} using testing method in \cite{Kitamura_Stoye_2018}. The test does not reject the null hypothesis that the choice data set is generated by RAS. 

I estimate the preference distribution in the experimental data. For easier model comparison, the utility is limited to constant relative risk aversion (CRRA). I compare my estimation with the benchmark model which constrains attention in the logistic function (CRRA-LA).\footnote{A logit attention is as follows: Consider a decision-maker who assigns a positive weight for each non-empty subset of X. Psychologically,$w_A$ is a strength associated with the subset $A$. The probability of considering $A$ in S can be written as $$\mu(A \mid S)=\frac{w_A}{\sum_{A^{\prime} \subset S} w_{A^{\prime}}}.$$}\footnote{A CRRA-LA setup restricts the attention rule to the logit attention and its utility is the CRRA utility.}. The RAS which only uses stopping time variation yields comparable estimation results to CRRA-LA which relies on data on menu variation.  However, the RAS identifies a significantly higher percentage of decision-makers who prefer a more complex lottery. 

The estimation results show that the extra data on stopping time is important in restricting the attention rules and identifying preferences for items which are much less likely to be considered. The RAS uses data with time variation when decision-makers are facing the same menu, while the CRRA-LA \citep{RUM_Victor_2021,LA_2016} estimation uses data when decision-makers are facing all kinds of menus but without considering the stopping time factor.

My approach differs from other works that have used stopping-time observations. These works have focused more on the link between choice accuracy and the time decision-makers spent, conditional on a specific payoff function. In comparison, the RAS uses stopping time observations to derive the preference distribution under limited attention. The RAS also differs from the rational inattention model with time \citep{Rational_inattention_time_2023} and sequential search models \citep{Sequential_1965,Weitzman_1979,Satisficing_aguair} in the assumption of having formed the consideration set. In sequential search models, one fundamental assumption is that people know the payoff distribution. In the rational inattention model, there is some updating process on the beliefs of what the payoff distribution is. However, in the random attention model, decision-makers do not need to know any information about the payoff distribution beyond their consideration set. Therefore, the attention is "random," and the formation of the consideration set may not have an optimal time to stop.

The drifting diffusion model (DDM) is also popular for using stopping time. While both the RAS and the DDM use stopping time, the two models do so from quite different angles. There are fundamental differences in the underlying assumptions of the RAS and the DDM. In the DDM, only two alternatives are present and one alternative is chosen if the net evidence is above a particular threshold. In contrast, in the RAS, multiple alternatives are present; and the decision-maker will choose the most preferred option if it is in their consideration set. The combination of noise and a threshold in the DDM leads to the possibility of mistakes if the threshold is very low or if there is an unusually high level of noise. However, in the RAS, decision-makers are assumed to make the best choice among the alternatives that they actually consider. In the RAS, the assumption of time monotonicity is exogenous to decision-makers' attention and is conditional on time, although the stopping time at the decision is endogenous. In general, in random attention models, researchers rule out the possibility of mistakes, and the choices decision-makers make are based only on their preferences and the consideration set at the time they make the decision. This difference leads the DDM and the RAS to have different focuses of study. The DDM has a focus on choice accuracy, while the RAS has a focus on the preference distribution.

The RAS is close to the random attention model (RAM) \citep{RAM_2020} in terms of the nonparametric assumption on attention rules. The RAM uses a nonparametric restriction named monotonicity in the size of the attention-forming process. It states that the probability of considering a given set cannot increase if the choice set grows. The RAM provides a general framework to test different models of stochastic consideration when preferences are homogeneous. Therefore, their framework is suitable for data on individuals' choices but not so much when the population has a distribution of different preferences. 

\cite{RAUM_2022} proposes a model that involves heterogeneous preferences and random attention. Their approach is very general, which allows for a correlation between attention rules and preference rankings. However, both \cite{RAUM_2022} and \cite{RAM_2020} rely on the use of menu variation. The RAS instead accommodates a nonparametric restriction on attention while proposing a different monotonicity feature (namely time monotonicity). This paper complements \cite{RAUM_2022} and \cite{RAM_2020} in the sense that it generalizes the use of variation in identifying preferences.


The rest of the paper is structured as follows: In Section \ref{sec: model}, I introduce the model setup. In Section \ref{homo}, I present the random attention span model (RAS) in a homogeneous preference setting. In section \ref{hetero}, I introduce the heterogeneous preference version of RAS. Section \ref{methods} has a discussion on the estimation and testing methods. In Section \ref{Empirical Results}, I present the empirical estimation results. In Section \ref{conclusions}, I provide the concluding remarks and a discussion.

\section{Model}\label{sec: model}
In the world of RAS, decision-makers begin by facing an observed fixed menu. Nature then picks a stopping time, consideration set, and preference for each decision-maker. Decision-makers then choose the best alternative according to their preferences within their consideration set.

\usetikzlibrary{arrows.meta,
                chains,
                positioning,
                shapes.geometric
                }
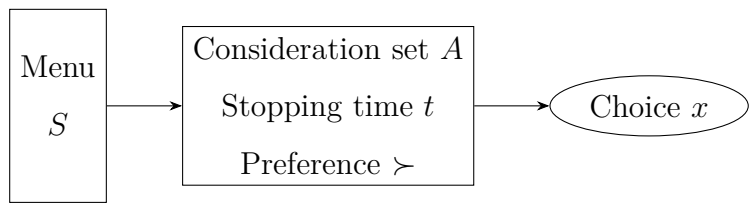
\begin{figure}[h]
\centering
    \begin{tikzpicture}[
    node distance = 10mm and 10mm,
      start chain = going right,
 disc/.style = {shape=cylinder, draw, shape aspect=0.3,
                shape border rotate=90,
                text width=17mm, align=center, font=\linespread{0.8}\selectfont},
  mdl/.style = {shape=ellipse, aspect=2.2, draw},
  alg/.style = {draw, align=center, font=\linespread{1.5}\selectfont}
                    ]
    \begin{scope}[every node/.append style={on chain, join=by -Stealth}]
\node (n1) [alg]  { \\ Menu \\ $S$ \\ };
\node (n2) [alg]  {Consideration set $A$\\ Stopping time $t$ \\Preference $\succ$};
\node (n3) [mdl]  {Choice $x$};
    \end{scope}
    \end{tikzpicture}
\caption{Decision-makers begin by facing an observed fixed menu. Nature then picks a stopping time, consideration set, and preference for each decision-maker. Decision-makers then choose the best alternative according to their preferences within their consideration set.}
\label{Figure 1}
\end{figure}

A finite set $S$ is a universal set of all mutually exclusive alternatives that decision-makers face, also known as the menu. An element of $S$ is denoted as $a$, and its cardinality is $|S|=K$. Let $\mathcal{X}$ denote the set of all non-empty subsets of $S$. In this paper, the menu $S$ is fixed and observed by researchers. Therefore, there is no menu variation being used for revealed preference implications.

The stopping time observation $t$ has a support from $[0,T^*]$ where $T^*$ is the maximum amount of time one can spend. The stopping time here represents the duration of time for the decision-makers to make their choice. Let $\succ$ denote the set containing all possible strict preference rankings to rule out ties.

For a scenario where a decision-maker is browsing through Amazon to buy a TV, the menu $S$ displays all the TV listings on Amazon. Item $a$ refers to a specific TV model listed on Amazon. A stopping time $t$ is the period from the start of browsing the TV section on Amazon to the point where the purchase is completed.

\begin{definition}{(Stochastic Choice Data Set).}
A stochastic choice dataset is a map $\pi: S \times T\rightarrow[0,1]$ such that for all $t\in T$,
\begin{equation*}
\begin{split}
&\pi(a \mid t) \geq 0, \forall a\in S,\\
&\sum_{a \in S} \pi(a \mid t)=1\\
\end{split}
\end{equation*}

\end{definition}

This expression, $\pi(a\mid  t)$, denotes the probability of selecting option $a$ in the menu $S$ at the time at which the decision is made, $t$. This formulation accommodates both deterministic and stochastic decision rules. If $\pi(a\mid t)$ equals either one or zero, then the decision is deterministic. For instance, in an online shopping scenario, $\pi(a\mid  t=5 min)$ is the probability of choosing TV $a$ when customers are shopping on Amazon and making their purchases within exactly five minutes.

It is important to understand that decision-makers may not pay attention to all possible options. To determine the likelihood of agents considering different options, it is critical to define the probabilities of potential consideration sets. For any given menu $S$, every non-empty subset of options may be a potential consideration set, and each set may have a certain probability of being considered. The frequency of each consideration set should fall between zero and one, and the total frequency should add up to one. Formally:

\begin{definition}{(Attention Rule).}
An attention rule is a map $\mu: \mathcal{X} \times T \times \succ \rightarrow[0,1]$ such that for all $A\in \mathcal{X}$, $t\in T$, $\succ_i\in \succ$
\begin{equation*}
\begin{split}
&\mu(A \mid t,\succ_i) \geq 0,\\
&\sum_{A \subseteq S} \mu(A\mid t,\succ_i)=1\\
\end{split}
\end{equation*}

\end{definition}

The notation $\mu(A \mid t,\succ_i)$ represents the probability of focusing explicitly on $A\subseteq S$ with stopping time $t$ when presented with the menu $S$. However, focusing exclusively on $A$ means that the decision-maker is seeing the exact set A, not just some of its subsets. For instance, $\mu(\{a, b\} \mid t,\succ_i)$ represents the probability that the decision-maker is seeing both $a$ and $b$, and not $a$ only or $b$ only. In addition, the possibility of individuals considering nothing or considering anything beyond the menu is ruled out.

Furthermore, by using a conditional attention rule, the RAS is different from \cite{RUM_Victor_2021} in that it does not limit the correlation between attention and a preference. This is especially significant as the preference often determines the consideration sets in various model settings. Conversely, \cite{RUM_Victor_2021} enforces the fact that the random rule for the consideration set and the heterogeneous preferences are independent.

\subsection{Random Attention Span (RAS)}
In this subsection, I introduce the Random Attention Span model with stopping time. Decision-makers of type $i$ have a strict preference $\succ_i$ for X, where $d_{\succ}$ is the number of possible strict preferences. The preference ranking is an asymmetric, transitive, and complete binary relation. With
probability $p(\succ_i)\in [0,1]$, decision-makers are endowed with preferences $\succ_i$ drawn from the set of all linear
orders (strict preferences) on $S$, $\succ$. The probability of all possible preferences sum up to one: $\sum_{\succ_i \in \succ}p(\succ_i)=1$. Decision-makers always pick the best alternative within their consideration set.

\begin{definition}{(Random Attention Span Representation).}
A stochastic choice data set $\pi$ has a random attention span representation if there is a preference ordering set $\succ$ for X, with the distribution of preference $p=[p(\succ_1),p(\succ_2),\cdots,p(\succ_{d_\succ})]$, and an attention rule $\mu$ such that:
$$
\pi(a \mid t)=\sum_{\succ_i \in \succ}p(\succ_i)\sum_{A \subseteq S} 1(a \text { is }-\succ_i \text { best in } A) \cdot \mu(A \mid t, \succ_i)
$$
for all $a\in S$.

\end{definition}
In this case, $\pi$ is represented by $(\mu, p)$. I also define $\pi$ as a random attention span (RAS). This RAS representation is very general and can be used to derive the implications of the revealed preference under time monotonicity.

\subsection{Preference Stability}
The RAS representation is so general that it cannot identify preferences without any further restrictions. The unrestricted distribution of preferences is permissive enough to explain any behavior. The use of random-attention-like models is based on establishing a connection between the data set of stochastic choices and the unobserved attention probabilities and with certain limitations on the attention rules and preference distributions. In the RAS, the attention rules are subject to two fundamental restrictions: time monotonicity and preference stability.

In the RAS, the preference distribution is assumed to be stable over time:
\begin{assumption}{(Preference Stability).}
\label{Assumption 0}
There exists a $p^* \in \Delta(\succ)$ such that $p^*(\succ_i \mid t) = p^*(\succ_i)$ for all $t$, $\succ_i \in \succ$.

\end{assumption}
This stability limitation leads to the marginal distribution of preferences of the general population being equivalent to the true distribution of heterogeneous preferences in the population. The true distribution of preferences is the distribution of choices that controls behavior in a counterfactual situation where there are no limited considerations.

\subsection{Example for RAS Representation}
The RAS representation accommodates \cite{Manzini_Mariotti_2014} while adding attention development with time.

\begin{example}{\cite{Manzini_Mariotti_2014} with Heterogeneous Preference.}

Suppose for a menu $S$, each element $a$ has an probability $\gamma_{t,\succ_i}(a)$ that $a$ is in the consideration set conditional on stopping time $t$ and preference ranking $\succ_i$. Let $\succ$ be the set of all possible preference orderings and $\succ_i \in \succ$. Let $p_t(\succ_i)$ be the probability of $\succ_i$ conditional on time t. Since the marginal distribution of preferences is assumed to be stable over time: $p_t(\succ_i)=p(\succ_i)$ for all $t$ and $\succ_i \in \succ$. The RAS representation is:
\begin{equation*} \label{eq2}
\begin{split}
\pi(a \mid t)=&\sum_{\succ_i \subseteq \succ}p(\succ_i)\sum_{A \subseteq S} 1(a \text { is }-\succ \text { best in } A) \cdot \mu(A \mid t, \succ)\\
=&\sum_{\succ_i \subseteq \succ}[p(\succ_i) \gamma_{t,\succ_i}(a) \prod_{b \in S: b \succ a}(1-\gamma_{t,\succ_i}(b))]
\end{split}
\end{equation*}
Moreover, in the RAS world, $\gamma_{t,\succ_i}(a)$ can depend on the preference ranking, and the correlation between it and the stopping time is expressed by the $\gamma_{t,\succ_i}(a)$. 
\end{example}

In this paper, I explore how people pay attention to different options when making decisions. Imagine there is a menu of all possible choices. At any given time, each option has a chance of catching the decision-maker's attention. When an option does catch their attention, it gets added to the decision-maker's consideration set that is like a mental pocket for options they are considering. As time passes, the collection of options in the consideration set grows. This is the basis for the time monotonicity assumption.

\begin{definition}{(Accumulated Attention)}
\label{Definition 1}

The accumulated attention at A is the probability of paying attention to any subset of A and to A itself:
$$\alpha(A\mid t,\succ_i)=\sum_{B \subseteq A}^{A} \mu(B\mid t,\succ_i) $$
\end{definition}
Logically, the accumulated attention is the probability that one's consideration set is within A or A's subsets and includes nothing apart from the universe of set A.

\begin{assumption}{(Monotonicity in Time).}
\label{Assumption 1}

For any $t<t'$, $A \subset S$,
$$
\alpha(A\mid t,\succ_i) \geqslant \alpha(A\mid t',\succ_i)
$$
and
$$
\alpha(S\mid t,\succ_i) = 1
$$

\end{assumption}

According to Assumption \ref{Assumption 1}, as the stopping time increases, the accumulated attention at any consideration set A should not increase.  In other words, the probability of considering items beyond consideration set A weakly increases for all possible consideration sets. This means that decision-makers are more likely to consider a larger set of alternatives over time. The reason for this consideration is that as time passes, decision-makers tend to move away from smaller consideration sets. The intuition behind this time monotonicity is that the decision-makers are exploring more options as they spend more time.

It is important to understand that with the time monotonicity, for any item $a$ and a given preference, the probability of $a$ being considered conditional on preference is weakly increasing over time. However, the weakly increase in the marginal probability of each item being considered does not guarantee time monotonicity.

\begin{example}

Consider a menu that consists of three items, namely $a$, $b$, and $c$, and two time periods.  

In period 1, $\mu(\{a\}\mid t=1)=0.5$, $\mu(\{b, c\}\mid t=1)=0.5$; 

while in period 2, $\mu(\{a,b\}\mid t=2)=0.5$, $\mu(\{c\}\mid t=2)=0.5$. 

In this example, the probabilities for each item being considered are all weakly increasing, yet the time monotonicity assumption is not satisfied. This is because in period 2, the probability that both b and c are considered together decreases. One reason is that people who considered both b and c in period 1 forget item b in period 2. The time monotonicity assumption rules out this forgetting behavior.

\end{example}

\subsection{Examples for Time Monotonicity}\label{sec: examples}
Many traditional frameworks satisfy time monotonicity. Here are three examples demonstrating why it is a reasonable assumption to make.

\begin{example}{[Top N\citep{Weitzman_1979}.]}

Consider a menu that consists of three items, namely $a$, $b$, and $c$, and three time periods. The search order is fixed. In period 1, only item $a$ is visible to the decision-maker; while in period 2, both $a$ and $b$ are visible; and in period $3$, all three items are visible. Thus, the accumulated attention for all non-empty subsets of the menu $S$ cannot increase. This result can be extended to a menu with $n$ items.

\end{example}

\begin{example}{[\cite{Manzini_Mariotti_2014} Extension.]}

Suppose for a menu $S$ and discrete stopping time $t$ that at each time period $t$, conditional on preference $\succ_i$, each element $a$ has an probability $\gamma_{t,\succ_i}(a)$ that $a$ is in the consideration set conditional on stopping time $t$ and preference ranking $\succ_i$.\footnote{For example, $\gamma_{t=1,a\succ b}(a)$ represents the probability that item $a$ is being considered at time period 1 for people who prefer $a$ to $b$} If I assume people do not forget what they already considered previously, then $\gamma_{t,\succ_i}(a)$ naturally increases as time moves on (the probability that $a$ is considered increases as t increases for all $a\in S$).

Further, the accumulated attention for a certain set $A$ is the probability that one is not considering anything beyond set $A$. Then the accumulated attention can be written as:
$$
\alpha(A\mid t,\succ_i) = \prod_{b \notin A}(1-\gamma_{t,\succ_i}(b))
$$

Since $\gamma_{t,\succ_i}(b)\leq \gamma_{t',\succ_i}(b)$ for all $t<t'$, $b\in S$. Then
$$
\prod_{b \notin A}(1-\gamma_{t,\succ_i}(b)) \geq \prod_{b \notin A}(1-\gamma_{t',\succ_i}(b)), \forall t<t'
$$
Therefore,
$$
\alpha(A\mid t,\succ_i) \geqslant \alpha(A\mid t',\succ_i), \forall t<t'
$$
\end{example}

\begin{example}{[Search Model with Satisficing (Fixed Utility)\citep{Satisficing_aguair}.]}

A search model with satisficing has the following procedure. First, decision-makers will search for items using a random search index in which items having higher search indexes being considered first. Each decision-maker has their own threshold, and once they find the first item that is above this threshold, they will stop searching and choose that item.

For $S=\{y_1,\dots,y_n\}$, let $s=(s_y)_{y\in S}\in \mathbb{R}^n$ be a random search index with a continuous c.d.f. $F_{(s)}$ and 
$\mathbf{u}=\left(\mathbf{u}_{y}\right)_{y \in S} \in \mathbb{R}^n
$ be a fixed utility. The search index represents the order in which an item will be considered; the larger an index an item has, the more up-front this item will be considered. So, for any $y_s$ and $y_z$, $y_s$ is searched before $y_z$ if and only if $s_{y_s}\geq s_{y_z}$. The distribution of search indexes is independent from the preference orderings.

For example, if the fixed utility has the following order: $u_{y_1}>u_{y_2}>u_{y_3}$, then it means that $y_1\succ y_2\succ y_3$.  And if the search index is $s_{y_2}>s_{y_1}>s_{y_3}$, then $y_1$ is searched after $y_2$ but before $y_3$.

$\tau (t)\in \mathbb{R}$ is a random variable where its distribution depends on time $t$. $\tau (t)$ denotes a threshold conditional on time $t$. For any $t'>t$, $\tau(t')$ is assumed to have a first-order stochastic dominance over $\tau(t)$ in which the people who stay longer are drawn from a distribution where they have higher thresholds.

For any consideration set A with cardinality $|A|=N$, denote $y_{{a_1}^A}\succ y_{{a_2}^A}\succ ...\succ y_{{a_n}^A}$, where ${a_i}^A$ represents the ith highest utility element in the consideration set A.

In general, for all $A\not=S$ and $|A|=N$, the accumulated attention can be constructed as:
$$\alpha(A \mid t)=\sum^A_{B\subseteq A}\mu(B \mid t)=\sum_{i=1}^NPr(u_{{a_i}^A}\geq \tau(t))\times P_{{a_i}^A}$$
where $u_{{a_i}^A}$ is the utility index of the $i$th highest utility item in consideration set $A$ and $P_{{a_i}^A}$ is a non-negative number.

And when $A=S$, $$\alpha(S \mid t)=\sum^A_{B\subseteq A}\mu(B \mid t)=1.$$

For example, if there are only three items: $y_1$, $y_2$, and $y_3$; the fixed utility has the following ordering: $u_{y_1}>u_{y_2}>u_{y_3}$, $A=\{y_1, y_2\}$. 
So, I have      
 \begin{equation*} \label{8}
\begin{split}
\alpha(A \mid t)=& \mu(\{y_1, y_2\} \mid t)+\mu(\{y_1\} \mid t)+\mu(\{ y_2\} \mid t)\\
=&[Pr(u_{y_1}\geq \tau(t))-Pr(u_{y_2}\geq \tau(t))]\times Pr(s_{y_{2}}\geq s_{y_{1}}\geq s_{y_{3}})\\
&+Pr(u_{y_{1}}\geq \tau(t))\times Pr(s_{y_{1}}\geq s_{y_{2}}, s_{y_{1}}\geq s_{y_{3}})+Pr(s_{y_{2}}\geq \tau(t))\times Pr(s_{y_{2}}\geq s_{y_{1}}, s_{y_{2}}\geq s_{y_{3}})\\
=&Pr(u_{y_1}\geq \tau(t))\times [Pr(s_{y_{2}}\geq s_{y_{1}}\geq s_{y_{3}})+Pr(s_{y_{1}}\geq s_{y_{2}}, s_{y_{1}}\geq s_{y_{3}})]\\
&+Pr(u_{y_2}\geq \tau(t))\times [Pr(s_{y_{2}}\geq s_{y_{1}}, s_{y_{2}}\geq s_{y_{3}})-Pr(s_{y_{2}}\geq s_{y_{1}}\geq s_{y_{3}})]
\end{split}
\end{equation*}

This equation can be rewritten as:
$$\alpha(A \mid t)= Pr(u_{y_1}\geq \tau(t))\times P_{a_1^A}+Pr(u_{y_2}\geq \tau(t))\times P_{a_2^A}$$
where $P_{a_1^A}= [Pr(s_{y_{2}}\geq s_{y_{1}}\geq s_{y_{3}})+Pr(s_{y_{1}}\geq s_{y_{2}}, s_{y_{1}}\geq s_{y_{3}})]\geq 0$ and $P_{a_2^A}=[Pr(s_{y_{2}}\geq s_{y_{1}}, s_{y_{2}}\geq s_{y_{3}})-Pr(s_{y_{2}}\geq s_{y_{1}}\geq s_{y_{3}})]\geq 0$. 

If $\tau(t')$ has a first-order stochastic dominance over $\tau(t)$ for all $t'>t$, then $Pr(u_{y_{i}}\geq \tau(t)) \geq Pr(u_{y_{i}}\geq \tau(t'))$ for all items in the consideration set A. Then, the accumulated attention probability is non-increasing in t for all strict subsets of the full menu and equals one if $A=S$. Further the monotonicity assumption in time is the following:

For any $t<t'$, $A \subset S$,
$$
\sum_{B \subseteq A}^{A} \mu(B \mid t) \geqslant \sum_{B \subseteq A}^{A} \mu\left(B \mid t'\right)
$$
and
$$
\sum_{B \subseteq S}^{S} \mu(B \mid t) = \sum_{B \subseteq S}^{S} \mu\left(B \mid t'\right)
$$

This is an example where stopping time is endogenous and people who stop at different times have different distributions of satisfactory thresholds. The search model with satisfying behavior satisfies monotonicity in time as long as the conditional threshold $\tau(t')$ has a first-order stochastic dominance over $\tau(t)$ for all $t'>t$.

The intuition of this first-order stochastic dominance is that as stopping time goes up, the distribution of the satisficing threshold moves to the right, meaning that people who stop later are more patient and have higher thresholds. Therefore, as decision-makers who stop later are more patient, the principle of time monotonicity is upheld in \cite{Satisficing_aguair}.

\end{example}

\begin{example}{[Attention Diffusion.]}

This example shows an attention-forming process which mimics \cite{DDM_2008}. However, unlike the DDM in which decision-makers collect information for decisions, decision-makers in this example collect the saliency scores for items to be in their consideration set. And time monotonicity is satisfied if the saliency threshold is non-increasing over time in this example.

The diffusion attention example is as follows: For any item $a_i \in S$, a decision-maker collects its saliency score over time until it reaches a certain saliency threshold $\tau_{a_i}(t)\geq 0$. If the score is above $\tau_{a_i}(t)$ at time $t$, then item $a_i$ is in the consideration set at time $t$.

Evidence accumulation in favor of $a_i$ is represented by the Brownian motion with drift $V_a(t) \equiv v_{a_i} t+B_{a_i,t}$, where $B_{a_i,t}\sim N(0, t\sigma^2)$.

Therefore at time $t$, the probability that item $a_i$ is in the consideration set $A$ is:
\begin{equation*} \label{1}
\begin{split}
\gamma_{t,\succ}(a_i)=& Pr(v_{a_i} t+B_{a_i,t}\geq \tau_{a_i}(t))\\
=& Pr(B_{a_i,t}\geq \tau_{a_i}(t) -v_{a_i} t)\\
=& 1-\Phi(\frac{\tau_{a_i}(t) -v_{a_i} t}{\sqrt{t}\sigma})\\
\end{split}
\end{equation*}

Time monotonicity is satisfied if the saliency threshold does not increase over time.

The derivative of the probability with respect to time is:
\begin{equation*} \label{2}
\begin{split}
\frac{\partial \gamma_{t,\succ}(a_i)}{\partial t}=& -\frac{1}{\sigma}\phi(\frac{\tau_{a_i}(t) -v_{a_i} t}{\sqrt{t}\sigma})\times (\frac{\tau'_{a_i}(t)}{\sqrt{t}}-\frac{\tau_{a_i}(t)}{2\sqrt{t^3}}-\frac{v_{a_i}}{2\sqrt{t}})\\
\end{split}
\end{equation*}

If $\tau'_{a_i}(t)\leq 0$, then $\frac{\partial \gamma_{t,\succ}(a_i)}{\partial t}\geq 0$. So, the probability that item $a_i$ is in the consideration set $A$ increases over time.

Note that the accumulated attention to a certain set $A$ is the probability that one is not considering anything beyond set $A$. Then the accumulated attention can be written as:
$$
\alpha(A\mid t,\succ) = \prod_{b \notin A}(1-\gamma_{t,\succ}(b))
$$

Since $\gamma_{t,\succ}(b)\leq \gamma_{t',\succ}(b)$ for all $t<t'$, $b\in S$. Then
$$
\prod_{b \notin A}(1-\gamma_{t,\succ}(b)) \geq \prod_{b \notin A}(1-\gamma_{t',\succ}(b)), \forall t<t'
$$
Therefore,
$$
\alpha(A\mid t,\succ) \geqslant \alpha(A\mid t',\succ), \forall t<t'
$$
\end{example}

\section{Revealed Preference Implications in a Homogeneous Preference Environment}\label{homo}

In this section, I provide the implications of the revealed preferences solely with stopping time variation when the population share a homogeneous preference. These implications apply for the occasions when firms do not change the menu of goods that they sell but manage to record how long customers take to make choices.

I develop a theorem of the implications of the revealed preferences and a corresponding algorithm to reject the "wrong" preference orderings. To begin with, I first build a lower contour set $A_y$ includes all items that are less preferred than y, $$A_y=\{x\in S \mid y\succ_i x\}$$

\begin{proposition}
\label{lemma 1}
Suppose that the time monotonicity is satisfied. With a lower contour set of y, where $A_y$ includes all items that are less preferred than y,  the RAS representation with preference $\succ_i$ can be characterized by the following condition:
$$
\sum_{i_x \in A_y} \pi\left(i_x\mid  t^{\prime}, \succ_i \right) \leqslant \sum_{i_x \in A_y} \pi(i_x\mid  t, \succ_i) \qquad \forall y\in S, t<t',
$$

\end{proposition}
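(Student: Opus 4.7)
The plan is to establish the equivalence in two directions: that any RAS representation with time-monotonic attention implies the stated inequality (necessity), and conversely that any stochastic choice dataset satisfying the inequality admits such a representation (sufficiency).

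For necessity, I would begin with the observation that if a decision-maker with preference $\succ_i$ selects an item $i_x \in A_y$ from some consideration set $A$, then $i_x$ must be the $\succ_i$-best element of $A$; by transitivity this forces every element of $A$ to lie strictly below $y$, so $A \subseteq A_y$. Summing the RAS identity over $i_x \in A_y$ and swapping the order of summation therefore gives
\begin{equation*}
\sum_{i_x \in A_y} \pi(i_x \mid t, \succ_i) \;=\; \sum_{A \subseteq A_y} \mu(A \mid t, \succ_i) \;=\; \alpha(A_y \mid t, \succ_i),
\end{equation*}
and applying the time monotonicity assumption at the set $A_y$ immediately yields the desired inequality.

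For sufficiency, I would construct an explicit attention rule supported on a single nested chain. Order the alternatives so that $s_1 \succ_i \cdots \succ_i s_K$ and define the suffix sets $B_j := \{s_{K-j+1}, \ldots, s_K\}$, noting that every lower-contour set $A_{s_k}$ coincides with $B_{K-k}$. For each $t$, place mass $\mu(B_{K-k+1} \mid t, \succ_i) := \pi(s_k \mid t, \succ_i)$ on each chain element and zero everywhere else. Since $s_k$ is the $\succ_i$-best element of $B_{K-k+1}$ and of no other chain element, this $\mu$ reproduces $\pi$, and its total mass is one because the $\pi(s_k \mid t, \succ_i)$ sum to one. For any $A \subseteq S$, the accumulated attention $\alpha(A \mid t, \succ_i)$ collapses to $\sum_{j \leq j^\ast(A)} \mu(B_j \mid t, \succ_i)$, where $j^\ast(A)$ is the largest $j$ with $B_j \subseteq A$; this equals $\sum_{i_x \in A_{s_{K-j^\ast(A)}}} \pi(i_x \mid t, \succ_i)$ whenever $1 \leq j^\ast(A) \leq K-1$, and it is trivially $0$ or $1$ at the extremes. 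The hypothesis then delivers $\alpha(A \mid t, \succ_i) \geq \alpha(A \mid t', \succ_i)$ for every $A$ and every $t < t'$.

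The main subtlety lies on the sufficiency side: time monotonicity constrains $\alpha$ on all $2^K - 1$ non-empty subsets, while the hypothesis only supplies inequalities on the $K$ lower-contour sets. The suffix-chain construction sidesteps this apparent gap by concentrating all attention on just $K$ nested sets, which forces every $\alpha(A \mid t, \succ_i)$ to coincide with the $\alpha$ of some lower-contour set, where the hypothesis applies directly.
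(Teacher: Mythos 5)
Your proof is correct and follows essentially the same route as the paper's: the necessity direction rests on the same identity $\sum_{i_x \in A_y}\pi(i_x\mid t,\succ_i)=\alpha(A_y\mid t,\succ_i)$ combined with time monotonicity at $A_y$, and the sufficiency direction uses the same construction that places mass $\pi(s_k\mid t,\succ_i)$ on the nested lower-contour (suffix) sets and zero elsewhere. The only difference is that you explicitly verify the constructed rule satisfies time monotonicity on every subset of $S$ via the chain structure, a check the paper's proof leaves implicit.
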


One implication of Proposition \ref{lemma 1} is that the most preferred item is more likely to be picked over time, while the least preferred item is less likely.

According to Proposition \ref{lemma 1}, if the stochastic data set has RAS representation and satisfies time monotonicity, the probability of selecting the least k favorite items, based on a certain preference ordering, should not increase as individuals spend more time. This reasoning helps to understand the relationship between time and probability in the context of choosing favorite items from a stochastic data set.

Unlike the RAM result \citep{RAM_2020}, Proposition \ref{lemma 1} does not directly reveal the preference between two items, but it provides an algorithm to eliminate incorrect preference orders. The algorithm is as follows:

\begin{definition}{(Rejection Method)}
\label{Rejection Algorithm}
 A preference ordering $\succ=\{a_1\succ a_2\succ a_3\succ ...\succ a_n\}$ is rejected if there exist some i, $t<t'$,  such that the stochastic choice dataset has:
 $$
\sum_{a_x \in A_i} \pi\left(a_x\mid  t^{\prime}, \succ \right) > \sum_{a_x \in A_i} \pi(a_x\mid  t, \succ)
$$
where $A_i=\{a_x:x\geq i\}$ is the lower contour set containing $a_i$ and all the items less favorable than $a_i$.

\end{definition}

The logic of rejection directly follows Proposition \ref{lemma 1}. In a homogeneous preference setup, for a proposed preference ordering
$\succ=\{a_1\succ a_2\succ a_3\succ ...\succ a_n\}$, if there exist a time point $t<t'$ and a number $i<n$, then the probability of picking the last i favorable item (according to the proposed preference ordering) increases, then the proposed preference ordering can not be the true one that allows the stochastic dataset to admit RAS representation. Therefore the proposed preference ordering is rejected.

\begin{theorem}{(Survival Characterization)}
\label{theorem 1}
A set of all possible orders of preferences $\succ$ (which survives from the rejection method with the data) is non-empty if and only if the stochastic choice data set has RAS representation with a homogeneous preference.
\end{theorem}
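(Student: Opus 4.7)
The plan is to prove the two directions of the iff separately. The forward direction (RAS representation implies a surviving preference exists) is essentially a restatement of Proposition \ref{lemma 1}: if $\pi$ admits a homogeneous RAS representation with preference $\succ^{*}$ and some time-monotonic $\mu$, then $\succ^{*}$ satisfies the inequality $\sum_{a_x\in A_y}\pi(a_x\mid t')\leq \sum_{a_x\in A_y}\pi(a_x\mid t)$ for every $y$ and every $t<t'$, which is exactly the non-rejection condition in Definition \ref{Rejection Algorithm}. Hence $\succ^{*}$ itself survives, so the surviving set is non-empty.

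For the harder converse, I would start from any surviving preference $\succ$ with ordering $a_1\succ a_2\succ\cdots\succ a_n$, and exhibit a specific attention rule $\mu$ that realises the RAS representation. The construction I propose concentrates all attention on the nested chain of lower-contour sets $L_i=\{a_i,a_{i+1},\ldots,a_n\}$: set $\mu(L_i\mid t)=\pi(a_i\mid t)$ and $\mu(B\mid t)=0$ for every $B\notin\{L_1,\ldots,L_n\}$. Nonnegativity and $\sum_i\mu(L_i\mid t)=1$ are immediate from $\pi$ being a stochastic choice dataset, and the representation identity holds for free because $a_i$ is the $\succ$-best element of $L_i$ and of no other set in the support of $\mu$, so $\pi(a_i\mid t)=\mu(L_i\mid t)$.

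The delicate step is to verify time monotonicity on \emph{all} subsets of $S$, not only on the chain. For any $A\subseteq S$, I would observe that the family $\{j:L_j\subseteq A\}$ is either empty (precisely when $a_n\notin A$) or a terminal segment $\{j^{*},\ldots,n\}$ of indices, so
$$\alpha(A\mid t)=\sum_{B\subseteq A}\mu(B\mid t)=\sum_{j\geq j^{*}}\mu(L_j\mid t)=\sum_{a_x\in L_{j^{*}}}\pi(a_x\mid t),$$
which is either identically $0$ or equal to the accumulated attention of a chain element. The survival of $\succ$ forces the latter to be weakly decreasing in $t$, delivering $\alpha(A\mid t)\geq\alpha(A\mid t')$ whenever $t<t'$. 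The boundary condition $\alpha(S\mid t)=1$ holds because $L_1=S$ and $\sum_j\pi(a_j\mid t)=1$.

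The main obstacle is exactly this verification: the rejection method supplies monotonicity only along the chain $L_1\supset L_2\supset\cdots\supset L_n$, whereas Assumption \ref{Assumption 1} demands it for every subset of $S$. Spreading $\mu$ across arbitrary subsets would force one to reconcile many simultaneous inequalities, whereas concentrating all mass on the chain collapses every accumulated attention $\alpha(A\mid t)$ onto the value of some $\alpha(L_{j^{*}}\mid t)$, so the chain-wise monotonicity implied by survival is automatically upgraded to the full time-monotonicity required. This is the single structural observation that drives the sufficiency direction.
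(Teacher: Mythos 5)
Your proof is correct and follows essentially the same route as the paper: the forward direction is exactly the appeal to Proposition \ref{lemma 1}, and the converse uses the same construction the paper employs in its proof of that proposition, namely concentrating all attention on the nested lower-contour sets with $\mu(L_i\mid t)=\pi(a_i\mid t)$ and zero elsewhere. If anything you are more careful than the paper, since you explicitly verify time monotonicity of the constructed rule on \emph{every} subset of $S$ via the terminal-segment observation, a step the paper's appendix argument leaves implicit.
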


\begin{proof}{(Theorem \ref{theorem 1})}

    The proof consists of two parts:

    (i) Prove that if the stochastic choice dataset has a RAS representation with a homogeneous preference, then the set of possible preference ordering $\succ$ is non-empty.

    If the stochastic choice data set has a RAS representation with a homogeneous preference $\succ^*$; then according to Proposition \ref{lemma 1}, we know that this is equivalent to the stochastic data set having for all y that $t<t'$,
$$
\sum_{i_x \in A_y} \pi\left(i_x\mid  t^{\prime}, \succ^* \right) \leqslant \sum_{i_x \in A_y} \pi(i_x\mid  t, \succ^*)
$$
where $A_y=\{i_x:i_x\succ^* y\}$ contains all available items that are ranked higher  than item y according to $\succ^*$.

Then the preference  $\succ^*$ survives the rejection algorithm. Therefore, the set of all non-rejected possible orders of preferences is non-empty.

(ii)  Prove that if the set of non-rejected possible preference orderings $\succ$ is non-empty, then the stochastic choice dataset has a RAS representation with a homogeneous preference.

If the set of non-rejected possible preference orderings $\succ$ is non-empty, there exists a preference ordering $\succ^{**}$ such that for all y, $t<t'$, the stochastic dataset satisfies
$$
\sum_{i_x \in A_y} \pi\left(i_x\mid  t^{\prime}, \succ^{**} \right) \leqslant \sum_{i_x \in A_y} \pi(i_x\mid  t, \succ^{**})
$$
where $A_y=\{i_x:i_x\succ^{**} y\}$ contains all available items that are ranked higher than item y according to $\succ^{**}$. 

According to Proposition \ref{lemma 1}, the stochastic dataset has a RAS representation with preference ordering $\succ^{**}$.

\end{proof}

The survival characterization offers a set of all possible preference orderings that allows the stochastic dataset to admit RAS representation if the preference ordering is homogeneous. 

The details of a survival algorithm are as follows:

For n items, suppose the preference ordering in the population is homogeneous:

1. Each single item $i_x$ could be a potential item that is ranked highest. Then define set $C^1_x$ as one with $n-1$ items that excludes the item $i_x$. With the rejection algorithm \ref{Rejection Algorithm}, check if
$$\sum_{i \in C^1_x} \pi\left(i\mid  t^{\prime}\right) \leqslant \sum_{i \in C^1_x} \pi(i\mid  t)$$ for all t and t'.
If not, I reject that $i_x$ is ranked first.

2. Based on the unrejected items $i_x$, check a second item $i_{x2}$ with corresponding $n-2$ item set $C^2_{x2}$ and check if 
$$\sum_{i \in C^2_{x2}} \pi\left(i\mid t^{\prime}\right) \leqslant \sum_{i \in C^2_{x2}} \pi(i\mid  t)$$ for all t and t'.
If not, I reject that $i_x$ is ranked first and $i_{x2}$ is ranked second.

3. Follow the procedure until all the items are checked.

This algorithm allows me to reject the unreasonable preference orderings. Here is one example using this algorithm:
\begin{example}
    Suppose there are three items $a\succ b\succ c$. At time t, let only set $\{b,c\}$ be considered. At time t', one will only consider set $\{a,b,c\}$. So, $\pi(a\mid t')=1$ and $\pi(b\mid t)=1$.
We observe this choice probability yet do not know the preference orderings.

Now following the algorithm:
First check when $i_x=a,b,c$

1.If $i_x=a$,
$\pi(b\mid t')+\pi(c\mid t')=0<\pi(b\mid t)+\pi(c\mid t)=1$, fails to reject that a is ranked first.

If $i_x=b$,
$\pi(a\mid t')+\pi(c\mid t')=1>\pi(a\mid t)+\pi(c\mid t)=0$, rejects that b is ranked first.

If $i_x=c$,
$\pi(b\mid t')+\pi(a\mid t')=1=\pi(b\mid t)+\pi(a\mid t)=1$ fails to reject that c is ranked first. However, since $\pi(c\mid t)=0$ for all t,  c cannot be ranked first. So, c is also rejected.

2. Given a is ranked first, check $i_{x2}=b$ or $c$

If $i_{x2}=b$,
$\pi(c\mid t')=0=\pi(c\mid t)$, fails to reject that a is ranked first and b is ranked second.

If $i_{x2}=c$,
$\pi(b\mid t')=0<\pi(b\mid t)=1$, fails to reject that a is ranked first and c is ranked second. However, since $\pi(c\mid t)=0$ for all t, c as the second preferred item is rejected.

So, we have $a\succ b\succ c$.
\end{example}

Further, this is a set identification of preference orderings. Theorem \ref{theorem 1} does not guarantee point identification.  Further research is needed to guarantee singleton survivor preference sets.

\section{Heterogeneous Preferences}\label{hetero}
However, in a heterogeneous environment, decision-makers' attention can be influenced by their preference type. For instance, in the search model with satisficing, individuals will stop searching when the last item they see offers a payoff above a certain threshold. This means that their attention development will stop when they come across such items. While this feature is reasonable, it makes it difficult for the classic limited attention models to accurately identify the preference distribution.

This section investigates the RAS under heterogeneous preferences. The RAS provides revealed preference implications and potential applications while allowing the attention rule to be correlated with preference type.

The RAS representation can be stated as:
\begin{equation} 
		\label{eq 3}
		U\cdot A \cdot P=\Pi
	\end{equation}
Where $U$ is the matrix consisting of probabilities of different attention sets being considered, conditional on time and preference. $A$ is the choice rule matrix that transforms attention sets into choices given preferences. And $P$ represents the preference distribution.
 
Denote n as the number of alternatives, $d_t$ as the number of clustered time periods, $d_{\succ}$ as the number of all possible orderings, $d_c$ as the number of all possible consideration sets, and denote that $d_u=d_{\succ}\times d_c$.

For example, if the attention rule is as follows:
$$U_{\succ_i}=
\begin{bmatrix}
\mu(a \mid  t_1, \succ_i) & \mu(b \mid  t_1, \succ_i) & ...& \mu(S \mid  t_1, \succ_i)\\
\mu(a \mid  t_2, \succ_i) & \mu(b \mid  t_2, \succ_i) & ...& \mu(S \mid  t_2, \succ_i)\\
... & ... & ...& ...\\
\mu(a \mid  t_{d_t}, \succ_i) & \mu(b \mid  t_{d_t}, \succ_i) & ...& \mu(S \mid  t_{d_t}, \succ_i)\\
\end{bmatrix}_{d_t \times d_c}
$$
and $$U=[U_{\succ_1} \quad U_{\succ_2} \quad ... \quad U_{\succ_{d_{\succ}}}]_{d_t \times d_u}$$

Combining the linear equality restrictions for all menus, choices, consideration sets, and preference orders, I can construct a matrix $A$ that consists of zero and  one that does not depend on $\pi$. This $A$ transforms the attention rule conditional on the preference into a choice probability for the different alternatives conditional on the preference. For example, if an individual has a preference of $a\succ b\succ c$ and the agent's consideration set is $\{b,c\}$, then this agent will choose $b$. The corresponding element in matrix A is one. 

\begin{example}{(Generating A Matrix)}

Suppose there are two items $a$ and $b$ and two possible preferences $a\succ_a b$ and $b \succ_b a$. The A matrix is constructed by blocks of discrete choices conditional on the preference orderings:
$$\footnotesize
A=\begin{bmatrix}
1(a \text { is }-\succ_a \text { best in } \{a\}) & 0 & 1(b \text { is }-\succ_a \text { best in } \{a\})& 0\\
1(a \text { is }-\succ_a \text { best in } \{b\}) & 0 & 1(b \text { is }-\succ_a \text { best in } \{b\})& 0\\
1(a \text { is }-\succ_a \text { best in } \{a,b\}) & 0 & 1(b \text { is }-\succ_a \text { best in } \{a,b\})& 0\\
0& 1(a \text { is }-\succ_b \text { best in } \{a\}) & 0 & 1(b \text { is }-\succ_b \text { best in } \{a\})\\
0&1(a \text { is }-\succ_b \text { best in } \{b\}) & 0 & 1(b \text { is }-\succ_b \text { best in } \{b\})\\
0&1(a \text { is }-\succ_b \text { best in } \{a,b\}) & 0 & 1(b \text { is }-\succ_b \text { best in } \{a,b\})
\end{bmatrix}$$ 

In this example, the first two columns in matrix $A$ contain blocks conditional on the preference orderings that indicate whether one will choose item $a$ under each consideration set.
$$A_a=\begin{bmatrix}
1(a \text { is }-\succ_a \text { best in } \{a\}) & 0 \\
1(a \text { is }-\succ_a \text { best in } \{b\}) & 0 \\
1(a \text { is }-\succ_a \text { best in } \{a,b\}) & 0 \\
0& 1(a \text { is }-\succ_b \text { best in } \{a\}) \\
0&1(a \text { is }-\succ_b \text { best in } \{b\}) \\
0&1(a \text { is }-\succ_b \text { best in } \{a,b\}) 
\end{bmatrix}$$ 

For instance, the block $\begin{bmatrix}
1(a \text { is }-\succ_a \text { best in } \{a\}) \\
1(a \text { is }-\succ_a \text { best in } \{b\}) \\
1(a \text { is }-\succ_a \text { best in } \{a,b\})\\
\end{bmatrix}$ represents whether item $a$ will be chosen conditional on $a\succ_a b$ under different consideration sets. Trivially, $a$ will not be chosen if the consideration set is $\{b\}$, and will be chosen as long as $a$ is the best item (according to the preference ordering) in the consideration set.

Therefore, in this example,
$$A=
\begin{bmatrix}
1 & 0 & 0& 0\\
0 & 0 & 1& 0\\
1 & 0 & 0& 0\\
0 & 1 & 0 & 0\\
0 & 0 & 0 & 1\\
0 & 0 & 0 & 1
\end{bmatrix}
$$

\end{example}

With the rule that at each time point the decision-maker will pick the best among their consideration set, the transformation matrix yields:
$$U\times A=\Pi_{\succ}$$

where the $\Pi_{\succ}$ denotes the probability of choices conditional on preferences:
$$
\Pi_{\succ}=
\begin{bmatrix}
\pi(a \mid  t_1, \succ_1) & \pi(a \mid  t_1, \succ_2) & ...& \pi(a \mid  t_1, \succ_{d_{\succ}})& ...& \pi(i_n \mid  t_1, \succ_{d_{\succ}})\\
\pi(a \mid  t_2, \succ_1) & \pi(a \mid  t_2, \succ_2) & ...& \pi(a \mid  t_2, \succ_{d_{\succ}})&...& \pi(i_n \mid  t_2, \succ_{d_{\succ}})\\
... & ... & ...& ...& ...&...\\
\pi(a \mid  t_{d_t}, \succ_1) & \pi(a \mid  t_{d_t}, \succ_2) & ...& \pi(a \mid  t_{d_t}, \succ_{d_{\succ}})&...& \pi(i_n \mid  t_{d_t}, \succ_{d_{\succ}})\\
\end{bmatrix}_{d_t \times (n\times d_\succ)}
$$

\begin{example}
The setup is the same as the previous example. There are two items $a$ and $b$ and two possible preferences $a\succ_a b$ and $b \succ_b a$. Therefore,
$$A=
\begin{bmatrix}
1 & 0 & 0& 0\\
0 & 0 & 1& 0\\
1 & 0 & 0& 0\\
0 & 1 & 0 & 0\\
0 & 0 & 0 & 1\\
0 & 0 & 0 & 1
\end{bmatrix}
$$

Suppose there are two time periods: $t_1$ and $t_2$. The attention rule is:
$$U=
\begin{bmatrix}
\mu(\{a\} \mid  t_1, \succ_a) & \mu(\{b\} \mid  t_1, \succ_a) & \mu(\{a,b\} \mid  t_1, \succ_a)&...& \mu(\{a,b\} \mid  t_1, \succ_b)\\
\mu(\{a\} \mid  t_2, \succ_a) & \mu(\{b\} \mid  t_2, \succ_a) & \mu(\{a,b\} \mid  t_2, \succ_a)&...& \mu(\{a,b\} \mid  t_2, \succ_b)\\
\end{bmatrix}_{2 \times 6}
$$
Then $$
\Pi_{\succ}=U\times A=
\begin{bmatrix}
\pi(a \mid  t_1, \succ_a) & \pi(a \mid  t_1, \succ_b) & \pi(b \mid  t_1, \succ_a)& \pi(b \mid  t_1, \succ_b)\\
\pi(a \mid  t_2, \succ_a) & \pi(a \mid  t_2, \succ_b) & \pi(b \mid  t_2, \succ_a)& \pi(b \mid  t_2, \succ_b)\\
\end{bmatrix}_{2 \times 2}
$$
where $\pi(a \mid  t, \succ_i)=\sum_{B \subseteq S} 1(a \text { is }-\succ \text { best in } B) \cdot \mu(B \mid t, \succ)$
\end{example}

Let
$p=\begin{bmatrix}
p(\succ_1) & p(\succ_2) & ... & p(\succ_{d_{\succ}})
\end{bmatrix}
$ be the distribution over all possible preferences.
Then $P$ is a block diagonal of $p$ with $n$ (number of items in the menu) blocks. 
$$P=\begin{bmatrix}
p^T & 0 & ... & 0\\
0 & p^T & ... & 0\\
... & ... & ... & ...\\
0 & 0 & ... & p^T\\
\end{bmatrix}_{(n\times d_{\succ)}\times n}
$$

where $$\Pi=
\begin{bmatrix}
\pi(a \mid  t_1) & \pi(b \mid  t_1) & ...& \pi(i_n \mid  t_1)\\
\pi(a \mid  t_2) & \pi(b \mid  t_2) & ...& \pi(i_n \mid  t_2)\\
... & ... & ...& ...\\
\pi(a \mid  t_{d_t}) & \pi(b \mid  t_{d_t}) & ...& \pi(i_n \mid  t_{d_t})\\
\end{bmatrix}_{d_t \times n}$$

\subsection{Theorem with Heterogeneous Preference}
To identify preferences, the characterization of the RAS is needed as follows:

\begin{theorem}
\label{theorem 2}
The following are equivalent:

(i) The stochastic choice dataset admits RAS representation with time monotonicity,

(ii) There exists 

\quad a) A matrix $U \in \mathbb{R}_{+}^{d_t\times d_u}$ such that for all $i\in [1,d_{\succ}]$, $k\in [1,d_t]$, $j\in [1,d_c]$
$$
\sum_{j\in [1,d_c]} U_{k, j+(i-1)\times d_c}=1
$$

 \quad b) $P$ where the sum of the elements in $p'$ is one.
 
  \quad c) $U$ follows time monotonicity inequality (\ref{Assumption 1}).
 
such that the stochastic dataset satisfies the equality:
$$
U\cdot A \cdot P=\Pi
$$

\end{theorem}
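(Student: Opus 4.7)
The plan is to show that the matrix equation $U\cdot A\cdot P=\Pi$ is nothing more than a bookkeeping rewrite of the scalar RAS identity stated in the definition, together with the two side conditions (attention rows summing to one, preference weights summing to one, and time monotonicity). Hence the proof proceeds by a direct two-way construction, with the bulk of the effort being index-chasing rather than deep analysis.

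\textbf{Direction (i) $\Rightarrow$ (ii).} Suppose $\pi$ has an RAS representation, so there exist an attention rule $\mu(\,\cdot\mid t,\succ_i)$ satisfying time monotonicity and a distribution $p$ over $\succ$ with
\[
\pi(a\mid t)=\sum_{\succ_i\in\succ} p(\succ_i)\sum_{B\subseteq S} \mathbf{1}(a\text{ is }\succ_i\text{-best in }B)\,\mu(B\mid t,\succ_i).
\]
I define $U$ block by block: for each preference type $\succ_i$, put the $d_t\times d_c$ block $U_{\succ_i}$ with entries $\mu(A_j\mid t_k,\succ_i)$ in the $i$-th slot of $U$. The row-sum condition $\sum_j U_{k,j+(i-1)d_c}=1$ is exactly $\sum_{B\subseteq S}\mu(B\mid t_k,\succ_i)=1$, which holds because $\mu(\,\cdot\mid t,\succ_i)$ is a probability distribution. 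Time monotonicity of $U$ is immediate from Assumption~\ref{Assumption 1}. Define $P$ as the block-diagonal of $p^{\top}$ as in the statement; the normalization follows from $\sum_i p(\succ_i)=1$. Finally, set $A$ to be the $0/1$ matrix described in the excerpt, so that $(U\cdot A)$ has columns indexed by (alternative, preference) pairs with entry equal to $\pi(a\mid t_k,\succ_i)$. Right-multiplication by $P$ collapses the preference index by taking the $p$-weighted average. Verifying entry by entry gives $(U\cdot A\cdot P)_{k,\ell}=\pi(a_\ell\mid t_k)=\Pi_{k,\ell}$, which is precisely the RAS formula above.

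\textbf{Direction (ii) $\Rightarrow$ (i).} Conversely, given $U$, $A$, $P$ with the three stipulated properties and $U\cdot A\cdot P=\Pi$, read off an attention rule and a preference distribution by the inverse of the above construction: let $\mu(A_j\mid t_k,\succ_i):=U_{k,j+(i-1)d_c}$ and $p(\succ_i):=p_i$ from the vector generating $P$. The row-sum condition on $U$ ensures each $\mu(\,\cdot\mid t,\succ_i)$ is a probability distribution on subsets of $S$; non-negativity comes from $U\in\mathbb{R}_+^{d_t\times d_u}$; condition (c) gives time monotonicity; and $\sum_i p(\succ_i)=1$ gives a valid preference distribution. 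Expanding the matrix product with this reinterpretation of entries yields exactly the RAS identity for $\pi(a\mid t)$, so $\pi$ admits an RAS representation that moreover satisfies time monotonicity.

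\textbf{Main obstacle.} There is no real analytical difficulty; both directions are essentially a change of notation. The care-demanding step is verifying that the column-indexing scheme of $A$ (pairs (alternative, preference)) and the block structure of $P$ interact as intended, so that the $p$-weighted sum over preferences in the RAS formula appears as the correct scalar in $(U\cdot A\cdot P)_{k,\ell}$. In particular I must check that the indicator entries in $A$ zero out all cross-preference terms, which is what the block-by-block construction in the excerpt's example achieves. Once this indexing is written out explicitly, both implications follow by inspection.
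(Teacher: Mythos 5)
Your proposal is correct and follows essentially the same route as the paper's own proof: the forward direction takes the true $\mu$ and $p$ as the entries of $U$ and $P$, and the converse simply reads an attention rule and preference distribution back off from $U$ and $P$, with conditions a)--c) guaranteeing they are valid probabilities satisfying time monotonicity. Your write-up is in fact more explicit than the paper's about the index bookkeeping in $A$ and the block structure of $P$, but the underlying argument is identical.
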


In order to find the preference distribution in the stochastic data set, I need a bundle $(U,P)$ with (a) a matrix $U$ that satisfies time monotonicity; (b) a bundle $(U,P)$ that minimizes the distance between $U\cdot A \cdot P$ and the choice data set $\Pi$. This method does not offer point identification since I have more parameters than the choice probabilities, yet I can get a set of preference distributions that are not rejected by theorem \ref{theorem 2} and could potentially be the true distribution under the heterogeneous preference data.

\section{Estimation and Testing Methods}\label{methods}
\subsection{Estimation Method}
One way to estimate the preference distribution is to simultaneously search for preference distribution $\hat p$ and attention rule $\hat U$ together to minimize the distance between stochastic choice data set $\hat \Pi$ and $U\cdot A \cdot P$. However, due to the large size of the restrictions on time monotonicity, this approach is hard to implement.

Therefore, I use simulation. The idea is to simulate a large number of attention rules that satisfy time monotonicity. I use the hit-and-run method mentioned in \cite{Hit_run_2021} to simulate the attention rules (See appendix \ref{hit-run} for details). In each simulation $k$, the preference distribution $\hat P_k$ can be easily found since $U^{sim}_k$ is fixed and time monotonicity is already satisfied. Then, I can find the simulation that offers the smallest minimized distance between $\hat \Pi$ and $U^{sim^*}\cdot A \cdot \hat P^{sim^*}$. The $\hat P^{sim^*}$ is the estimation result for the preference distribution.

With the RAS representation $U\cdot A \cdot P=\Pi$, the estimation method here is to simulate the attention rules $U^{sim}_k$ (according to time monotonicity) and minimize the distance between the estimation and the data by finding the proper $\hat P_k$ given the simulated $U^{sim}_k$. 
\begin{equation} 
		\label{10}
\begin{aligned}
\hat P_k=\operatorname{ArgMin}_{P} & \quad ||  U^{sim}_k \cdot A \cdot P-\hat \Pi ||^2\\
\end{aligned}
	\end{equation}

 and the distance is
\begin{equation} 
		\label{3}
\begin{aligned}
d_k=\operatorname{Min}_{P} & \quad ||  U^{sim}_k \cdot A \cdot P-\hat \Pi ||^2\\
\end{aligned}
	\end{equation}

With K number of simulations, 
\begin{equation} 
		\label{4}
\begin{aligned}
\hat P=\{\hat P_k \mid d_k \geq d_i,  \quad \forall i \leq K\}
\end{aligned}
	\end{equation}
With the choice dataset $\Pi$, one can draw some empirical implications about the preference distribution $P$. In fact, if the conditional choice matrix $U(S)\cdot A$ does not have multicollinearity and the maximum time period is greater than $(n_m-1)!$ ($n_m$ represents the number of items in the menu), then the distribution of heterogeneous preferences can be uniquely identified. 

\subsection{Testing Method}
Theorem 2 allows me to test whether a given stochastic choice rule $\Pi$ can be generated by the RAS representation. Testing for RUM is a well-understood problem
and amounts to solving a quadratic optimization with cone constraints \citep{Kitamura_Stoye_2018}. However, the steps for testing the RAS need to be adjusted since the time monotonicity constraints in the attention rule grow exponentially. Therefore, in order to test whether the data have a RAS representation,  I first simulate a large number N of attention rules $U^{sim}_i$ ($i\in N$) that follows time monotonicity. For each $U^{sim}_i$, I minimize the distance:
\begin{equation} 
		\label{5}
\begin{aligned}
d_i=\operatorname{Min}_{P} & \quad ||  U^{sim}_i \cdot A \cdot P-\Pi ||^2\\
\end{aligned}
	\end{equation}
and the approximated attention rule $\hat U$ is the simulated $U^{sim}_i$ that has the least distance: $\hat U=min\{U^{sim}_i | i\in N \}$. 
In appendix \ref{Simulation}, I have shown that given a large enough number of simulations, the true attention rule lies in the set of $U^{sim}_i$ and should be close enough to $\hat U$. Therefore, I am testing whether the stochastic choice dataset admits RAS with the specific attention rule $\hat U$. According to Theorem 2, this is equivalent to testing the null hypothesis: $$
\inf _{P \in \mathbb{R}_{+}^d}\left\|\Pi-\hat U A P\right\|=0
$$
This testing problem can be directly linked to the testing problem in \cite{Kitamura_Stoye_2018}.

Further, the true choice probability $\Pi$ is not observed, but the realized choice frequencies $\hat\Pi$ are. Given the estimator of the true choice probability $\hat\Pi$, the attention rule estimator $\hat U$ is computed. The test statistics is:
$$
\mathrm{T}_n=n \min _{\left[P-\tau_n \iota / d\right] \in \mathbb{R}_{+}^d}\left(\hat\Pi-\hat U A P\right)^{\prime} \hat{\Omega}^{-}\left(\hat\Pi-\hat U A P\right)
$$
where n is the sample size; $\hat{\Omega}^{-}$ is a generalized inverse of a diagonal
matrix $\hat{\Omega}$ such that the i-th diagonal element $\hat{\Omega}_{i,i}$ is a consistent estimator of the asymptotic variance in the i-th component of $\hat\Pi$; $\tau_n$ is a tuning parameter; and $\iota$ is a vector of ones of dimension d.
Let $\hat\Pi_l$, $l=1, \ldots, L$, be bootstrap replications of $\hat\Pi$. The critical values of $T_n$ are computed following the bootstrap procedure in \cite{Kitamura_Stoye_2018}:

(i) Compute $\hat\eta_{\tau_n}=\hat U A P_{\tau_n}$, where $P_{\tau_n}$ solves

$$
n \min _{\left[P-\tau_n \iota / d\right] \in \mathbb{R}_{+}^d}\left(\hat\Pi-\hat U A P\right)^{\prime} \hat{\Omega}^{-}\left(\hat\Pi-\hat U A P\right)
$$

(ii) Compute $$\hat\Pi_l=\hat\Pi_l-\hat\Pi+\hat\eta_{\tau_n}$$
for every $l=1, \ldots, L$ and $\Omega^*$

(iii) Compute the bootstrap test statistic 
$$
\mathrm{T}^*_{n,l}=n \min _{\left[P-\tau_n \iota / d\right] \in \mathbb{R}_{+}^d}\left(\hat\Pi_l-\hat U A P\right)^{\prime} \hat{\Omega}^{*-}\left(\hat\Pi_l-\hat U A P\right)
$$
$l=1, \ldots, L$.

(iv) Use the empirical distribution of the bootstrap statistic to compute critical values of $T_n$.

For a specific confidence level $\alpha \in (0, \frac{1}{2})$, the decision rule for the test is to reject the null hypothesis of the RAS with a particular $U^{sim}$ value if the test statistic $\mathrm{T}_n$ is greater than the $(1-\alpha)$ quantile of the empirical distribution of the bootstrap statistic that is denoted as $\hat{c}_{1-\alpha}$. It is important to note that if the null hypothesis is incorrect and the asymptotic variance of $\hat\Pi$ is bounded above and is not equal to zero, then the test statistic diverges toward infinity as the sample size increases. Therefore, as the sample size grows, I can reject the wrongly specified null hypothesis with a probability that approaches one.

\section{Empirical Results}\label{Empirical Results}

In this section, the RAS model is tested and applied to estimate the preference distribution in the experimental data collected by \cite{RUM_Victor_2021}. When time variation is used with the RAS and the CRRA preference, the result is consistent with the estimation obtained through the menu variation with a logit-attention rule structure and the CRRA preference \citep{RUM_Victor_2021}.

\subsection{The Experimental Dataset}
The experiment is conducted through Amazon MTurk for a large cross-section with at most two disjoint choice sets per individual. The experimental design produces a stochastic choice data set with variation in the full choice set. The experiment contains six lottery alternatives with different expected values and variances to induce (potential) preference heterogeneity. Table \ref{table:1} shows the alternatives and implied preference orderings if the decision-makers are expected utility maximizers with a CRRA Bernoulli utility function
$$
u(x)= \begin{cases}\frac{x^{1-\sigma}}{1-\sigma}, & \sigma \neq 1 \\ \ln (x), & \sigma=1\end{cases}
$$

\begin{table}[ht]
\caption{Lotteries measured by tokens, expected values, and variance}
		\label{table:1}
\resizebox{\columnwidth}{!}{\begin{tabular}{llllllllll}
\hline
Lottery & Expectation & Variance  & \multicolumn{7}{c}{Preference Rank $u(x)=\frac{x^{1-\sigma}}{1-\sigma}$ With $\sigma$}                           \\ \hline
        & &  & -1  & 0   & 0.25 & 0.3 & 0.5 & 0.75 & 1 \\ \cline{4-10}
(1)  $\frac{1}{2} 50+\frac{1}{2} 0$        & 25.000 & 625.00 & 1   & 1 &  2    &  5   &   5  &  6    &  6 \\
 $(2)$  $\frac{1}{2} 30+\frac{1}{2} 10$       & 20.000 & 100.00 & 5          & 5     &   5   &  2   &  1   &  1    &  1 \\
 $(3)$  $\frac{1}{4} 50+\frac{1}{4} 30+\frac{1}{4} 10+\frac{1}{4} 0$       & 22.500 & 368.75 & 3       & 3  &   4   &   4  &   3  &   4   &  4 \\
 $(4)$  $\frac{1}{4} 50+\frac{1}{5} 48+\frac{3}{20} 14+\frac{2}{5} 0$       & 24.125 & 511.73 & 2     & 2  &   1   &   3  &  4   &  5    &  5 \\
 (5)  $\frac{1}{5} 48+\frac{1}{4} 30+\frac{3}{20} 14+\frac{1}{4} 10+\frac{5}{20} 0$       & 21.625 & 251.11 & 4 & 4   &  3    &  1   & 2    & 3     &  3 \\ 
(o)  12 with probability 1        & 12.000 & 0.00 &  6         &    6      &  6    &   6  &  6   &  2    & 2  \\ \cline{1-10} 
\end{tabular}}
\end{table}

I have restricted the decision-makers to follow CRRA utility function in my empirical estimation. According to \cite{RUM_Victor_2021}, the CRRA-LA model is tested with experimental data and the model is not rejected. Moreover, although I have estimated the preference distribution under the CRRA, the estimation does not require any restrictions on the utility's functional form and is fully nonparametric for utility. The use of the CRRA is only to compare my estimation result with \cite{RUM_Victor_2021}.

The RAS uses an estimation approach that only requires time variation without the need for menu variation. Therefore, I only consider the segment of the data where individuals face the full menu that includes all possible lotteries. This segment of the data consists of 171 observations that comprise the lottery choices made by individuals and the time they took to make their choices.
 
To obtain the choice probability conditional on time $\Pi(t\mid S)$, I cluster the observations into six time periods. The first period is grouped by observations spending zero seconds in making choices. The remaining five time periods are clustered using the K-mean clustering method. Table \ref{table:2} shows the K-mean clustered choice data set. Each row represents the time period the probability is conditional on, and each column represents the probability of picking the according lottery during the time period. Interestingly, all players who made a choice in period 1 (where they spend almost no time in making choices) chose the outside option $l_O$. This choice indicates that at the beginning of the experiment, individuals' consideration set was limited to the outside option.

\begin{table}[ht]
\caption{Kmean Clustered Choice Data}
		\label{table:2}
\resizebox{\columnwidth}{!}{\begin{tabular}{lllllll}
\hline
period & l1          & l2         & l3         & l4          & l5          & lO          \\ \hline
1      & 0           & 0          & 0          & 0   & 0           & 1 \\
2      & 0.142857    & 0.204082   & 0.142857   & 0.173469    & 0.112245    & 0.22449 \\
3      & 0.153061    & 0.326531   & 0.204082   & 0.132653    & 0.102041    & 0.0816327  \\
4      & 0.142857    & 0.459184   & 0.122449   & 0.0510204   & 0.112245    & 0.112245 \\
5      & 0.208333    & 0.395833   & 0.114583   & 0.020833    & 0.20833     & 0.0520083 \\
6      & 0.14        & 0.31       & 0.17       & 0.11        & 0.17        & 0.1           \\ \hline
\end{tabular}}
\end{table}

\subsection{Estimation Results}

The goal here is to find the simulated attention rule $U^{sim}$ and preference distribution $P$ bundle that offers the least distance between the estimation and the data. To reduce the computational burden, I assume the preference follows the CRRA rule with $\sigma \in [-1,1]$. I also assume $l_O$ as the outside option lottery (the robustness check in the appendix shows this assumption is plausible). With this CRRA structural restriction and $l_O$ as the outside option, there are six available preference orderings, and each preference ordering corresponds to a range of risk aversion. Table \ref{table:3} provides a summary of the six preference orderings and their corresponding risk aversion ranges.

\begin{table}[ht]
\caption{CRRA Preference Orderings and Corresponding Risk Aversion}
		\label{table:3}
  \begin{center}

\begin{tabular}{lc} 
Preference Order  & Implied $\sigma$ \\
\hline$l_1 \succ l_4 \succ l_3 \succ l_5 \succ l_2$  & {$[-1,0.2287)$} \\
$l_4 \succ l_1 \succ l_5 \succ l_3 \succ l_2$  & $(0.2287,0.2606)$ \\
$l_4 \succ l_5 \succ l_1 \succ l_3 \succ l_2$  & $(0.2606,0.2728)$ \\
$l_5 \succ l_4 \succ l_2 \succ l_3 \succ l_1$  & $(0.2728,0.2832)$ \\
$l_5 \succ l_2 \succ l_4 \succ l_3 \succ l_1$  & $(0.2832,0.3001)$ \\
$l_2 \succ l_5 \succ l_3 \succ l_4 \succ l_1$  & $(0.3001,1]$
\end{tabular}
\end{center}
\end{table}

Based on my results in Table \ref{table:4}, there is a general alignment with the previous research (Table \ref{table:5}) in terms of preference distribution. This general alignment of results shows that RAS can use time variation instead of menu variation and offer a similar outcome. Importantly, RAS complements random attention models when there is a lack of menu variation in data sets. 

\begin{table}[ht]
\caption{Estimation Results}
    \label{table:4}
  \begin{center}
\begin{tabular}{lcc} 
Preference Order & $\hat{p}$ & Implied $\sigma$ \\
\hline$l_1 \succ l_4 \succ l_3 \succ l_5 \succ l_2$ & $0.30181$ & {$[-1,0.2287)$} \\
$l_4 \succ l_1 \succ l_5 \succ l_3 \succ l_2$ & $0.00000$ & $(0.2287,0.2606)$ \\
$l_4 \succ l_5 \succ l_1 \succ l_3 \succ l_2$ & $0.00000$ & $(0.2606,0.2728)$ \\
$l_5 \succ l_4 \succ l_2 \succ l_3 \succ l_1$ & $0.22979$ & $(0.2728,0.2832)$ \\
$l_5 \succ l_2 \succ l_4 \succ l_3 \succ l_1$ & $0.00000$ & $(0.2832,0.3001)$ \\
$l_2 \succ l_5 \succ l_3 \succ l_4 \succ l_1$ & $0.46839$ & $(0.3001,1]$
\end{tabular}
\end{center}
\end{table}

\begin{table}[ht]
\caption{Estimation Results from CRRA-LA Model with Menu Variation}
		\label{table:5}
  \begin{center}
\begin{tabular}{lcc} 
Preference Order & $\hat{p}$ & Implied $\sigma$ \\
\hline$l_1 \succ l_4 \succ l_3 \succ l_5 \succ l_2$ & $0.30500$ & {$[-1,0.2287)$} \\
$l_4 \succ l_1 \succ l_5 \succ l_3 \succ l_2$ & $0.04905$ & $(0.2287,0.2606)$ \\
$l_4 \succ l_5 \succ l_1 \succ l_3 \succ l_2$ & $0.04905$ & $(0.2606,0.2728)$ \\
$l_5 \succ l_4 \succ l_2 \succ l_3 \succ l_1$ & $0.04905$ & $(0.2728,0.2832)$ \\
$l_5 \succ l_2 \succ l_4 \succ l_3 \succ l_1$ & $0.04905$ & $(0.2832,0.3001)$ \\
$l_2 \succ l_5 \succ l_3 \succ l_4 \succ l_1$ & $0.49880$ & $(0.3001,1]$
\end{tabular}
\end{center}
\end{table}

However, the RAS identifies a higher percentage of decision-makers who prefer lottery 5. The estimation in the CRRA-LA setup, where the full menu variation is necessary, is shown in Table \ref{table:5}. Compared to the CRRA-LA setup, the RAS estimates in Table \ref{table:4} show a similar percentage of decision-makers with preference 1\footnote{Preference 1: $l_1 \succ l_4 \succ l_3 \succ l_5 \succ l_2$} ($30.2\%$ versus $30.5\%$ for preference 1). On the other hand, it shows a slightly lower percentage of decision-makers with preference 6\footnote{Preference 6: $l_2 \succ l_5 \succ l_3 \succ l_4 \succ l_1$} ($46.8\%$ versus $49.9\%$ for lottery 2). Therefore, the estimations for preference 1 and preference 6 are stable in both the RAS and CRRA-LA models.

Further, the RAS identifies a significantly higher percentage of individuals who prefer Lottery 5 the most. Thus, Lottery 5 has the highest number of operations, and one possible reason for this different result is that people tend to avoid considering items that look more complicated, yet they might prefer these items in terms of payoffs and variations. The RAS successfully retrieves this preference under limited attention by assigning a higher weight to the individuals who choose lottery 5 (according to time monotonicity).

The difference in magnitude may come from three sources. First,  the RAS captures the attention-forming process with time monotonicity and provides extra information about the stopping time. This information helps in providing a more accurate restriction on attention and leads to a more precise estimation of the preference distribution. Second, the RAS allows a correlation between the attention and a preference that extracts further information about preferences from the data sets. However, my estimation only uses the part of the full data where individuals face a full menu. Therefore, it explains only that part of the data and not the entire data with full menu variation. It would be interesting to add menu variation to the RAS for a fairer comparison of the two estimations.

The RAS was tested with the experimental data, and the resulting p-value of 0.245 indicates that the null hypothesis cannot be rejected. This nonrejection means that the data are likely generated by the RAS, given the simulated attention rule $U^{sim}$. However, while the null hypothesis is rejected, the p-value is still quite low. This is because the test only provides evidence against one specific attention rule $U^{sim}$, rather than the entire set of possible attention rules. One drawback of this testing method is that rejecting the null hypothesis does not necessarily mean that the entire model is rejected. Instead, it only rejects the model's compatibility with a specific attention rule. But since the null cannot be rejected, the entire model of the RAS cannot be rejected as well.


\section{Conclusions}\label{conclusions}
Menu variation is a 'gold mine' in limited attention models. However, it is crucial for researchers to develop limited attention models when menu variation is not available. In this paper, I develop a random attention span model. The RAS is able to provide preference distribution under limited attention with only time variation instead of menu variation. Moreover, it has the potential to generalize the use of variation in identifying preferences.


The model offers sufficient conditions and an algorithm for rejecting untrue preference orderings. I provide a characterization of the RAS. With the characterization, the RAS is testable with the estimated attention rule. 
I then estimate the preference distribution and test the RAS with experimental data from \cite{RUM_Victor_2021}. When restricted to the constant relative risk aversion (CRRA) preferences, the RAS estimation has a general alignment with the CRRA-LA model \citep{RUM_Victor_2021} that uses menu variation. However, the RAS identifies a significantly higher percentage of decision-makers who prefer a more complex lottery. I also test the RAS model using methods in \cite{Kitamura_Stoye_2018}, and the test fails to reject the null that the choice data set is generated by the RAS.

These results are important and desirable because the RAS uses different pieces of information compared to the CRRA-LA setup \citep{RUM_Victor_2021}.\footnote{RAS uses data with time variation when decision-makers are facing the same menu, while CRRA-LA estimation uses data when decision-makers are facing all kinds of menus but without considering the stopping time factor. }  However, in the experimental data by \cite{RUM_Victor_2021}, 32 menus are assigned to decision-makers; therefore, the number of observations used by the RAS is much smaller than the observations used by the CRRA-LA model. This smaller sample means that the RAS is able to offer estimates of the preference distribution as well as the CRRA-LA with fewer observations. Moreover, menu variations are observed much less than time variation, which gives RAS an advantage in empirical estimation. 

In summary, RAS is a new model for understanding consumer behavior under limited attention when menu variation is not available. This is especially important in digital contexts where attention spans are short and menus don't change often.

\bibliographystyle{apalike}
\phantomsection\addcontentsline{toc}{section}{\refname}\bibliography{ageingbib}

\begin{thebibliography}{}

\bibitem[Aguiar, 2017]{Aguair_2017}
Aguiar, V. (2017).
\newblock Random categorization and bounded rationality.
\newblock {\em Ann. Statist.}, 159:46--52.

\bibitem[Aguiar et~al., 2016]{Satisficing_aguair}
Aguiar, V., Boccardi, M.~J., and Dean, M. (2016).
\newblock Satisficing and stochastic choice.
\newblock {\em Journal of Economic Theory}, 166:445--482.

\bibitem[Aguiar et~al., 2023]{RUM_Victor_2021}
Aguiar, V., Boccardi, M.~J., Kashaev, N., and Kim, J. (2023).
\newblock Random utility and limited consideration.
\newblock {\em Quantitative Economics}, 14.1:71--116.

\bibitem[Aguiar and Kashaev, 2021]{Hit_run_2021}
Aguiar, V. and Kashaev, N. (2021).
\newblock Stochastic revealed preferences with measurement error.
\newblock {\em The Review of Economic Studies}, 88:2042--2093.

\bibitem[Aguiar and Kashaev, 2022]{RAUM_2022}
Aguiar, V. and Kashaev, N. (2022).
\newblock A random attention and utility model.
\newblock {\em Journal of Economic Theory}.

\bibitem[Brady and Rehbeck, 2016]{LA_2016}
Brady, R. and Rehbeck, J. (2016).
\newblock Menu-dependent stochastic feasibility.
\newblock {\em Econometrica}, 84(3):1203--1223.

\bibitem[Cattaneo et~al., 2020]{RAM_2020}
Cattaneo, M., Ma, X., Masatlioglu, Y., and Suleymanov, E. (2020).
\newblock A random attention model.
\newblock {\em Journal of Political Economy}, 128(7):2796--2836.

\bibitem[Goeree, 2008]{Goeree_2008}
Goeree, M. (2008).
\newblock Limited information and advertising in the us personal computer industry.
\newblock {\em Econometrica}, 76(5):1017--1074.

\bibitem[Hebert and Woodford, 2023]{Rational_inattention_time_2023}
Hebert, B. and Woodford, M. (2023).
\newblock Rational inattention when decisions take time.
\newblock {\em Journal of Economic Theory}, 203.

\bibitem[Honka et~al., 2017]{Honka_2017}
Honka, E., Hortacsu, A., and Vitorino, M. (2017).
\newblock Advertising, consumer awareness, and choice: Evidence from the us banking industry.
\newblock {\em The RAND Journal of Economics}, 48(3):611--646.

\bibitem[Kitamura and Stoye, 2018]{Kitamura_Stoye_2018}
Kitamura, Y. and Stoye, J. (2018).
\newblock Nonparametric analysis of random utility models.
\newblock {\em Econometrica}, 86(6):1833--1909.

\bibitem[Krajbich et~al., 2010]{Krajbich_2012}
Krajbich, I., Lu, D., Camerer, C., and Rangel, A. (2010).
\newblock The attentional drift-diffusion model extends to single purchasing decisions.
\newblock {\em Frontiers in Psychology}, 13:1--18.

\bibitem[Manzini and Mariotti, 2014]{Manzini_Mariotti_2014}
Manzini, P. and Mariotti, M. (2014).
\newblock Stochastic choice and consideration sets.
\newblock {\em Econometrica}, 82(3):1153--1176.

\bibitem[McCall, 1965]{Sequential_1965}
McCall, J. (1965).
\newblock The economics of information and optimal stopping rules.
\newblock {\em The Journal of Business}, 38.3:300--317.

\bibitem[McFadden and Richter, 1990]{Mcfadden_1990}
McFadden, D. and Richter, M.~K. (1990).
\newblock Stochastic rationality and revealed stochastic preference.
\newblock {\em Preferences, Uncertainty, and Optimality, Essays in Honor of Leo Hurwicz,}, 16(4):161--186.

\bibitem[Milosavljevic et~al., 2010]{Milo_2010}
Milosavljevic, M., Malmaud, J., Huth, A., Koch, C., and Rangel, A. (2010).
\newblock The drift diffusion model can account for the accuracy and reaction time of value-based choices under high and low time pressure.
\newblock {\em Judgment and Decision Making}, 5(6):437--449.

\bibitem[Ratcliff and McKoon, 2008]{DDM_2008}
Ratcliff, R. and McKoon, G. (2008).
\newblock The diffusion decision model: Theory and data for two- choice decision tasks.
\newblock {\em Neural Computation}, 20(4):873--922.

\bibitem[Reutskaja et~al., 2011]{Reutskaja_2011}
Reutskaja, E., Nagel, R., Camerer, C.~F., and Rangel, A. (2011).
\newblock Search dynamics in consumer choice under time pressure: An eye-tracking study.
\newblock {\em American Economic Review}, 101(2):900--926.

\bibitem[van Nierop et~al., 2010]{van_2010}
van Nierop, E., Bronnenberg, B., Paap, R., Wedel, M., and Franses, P. (2010).
\newblock Retrieving unobserved consideration sets from household panel data.
\newblock {\em Journal of Marketing Research}, 47(1):63--74.

\bibitem[Weitzman, 1979]{Weitzman_1979}
Weitzman, M. (1979).
\newblock Optimal search for the best alternative.
\newblock {\em Econometrica}, 47:641--654.

\end{thebibliography}

\appendix

\section{Robustness Check for CRRA-LA Estimation}
In section 5, I assume that lottery O is the outside option and is always ranked last. Yet this is not necessarily true. To check the robustness of the estimation in section 5, I withdraw the outside option assumption and put lottery O into regular ranking settings. Table \ref{table:6} show the robustness check result. Compared to Table \ref{table:4}, the results show very similar estimations on preference distribution $\hat{\pi}$; these results indicate that the estimation of $\hat{\pi}$ is pretty robust with the assumption of the outside option. 
The robustness check shows that the estimation in section 6 has very satisfactory robustness.

\begin{table}[]
\caption{Estimation Results Robustness Check}
		\label{table:6}
  \begin{center}
\begin{tabular}{lcc} 
Preference Order & $\hat{p}$ & Implied $\sigma$ \\
\hline$l_1 \succ l_4 \succ l_3 \succ l_5 \succ l_2$ & $0.25744$ & {$[-1,0.2287)$} \\
$l_4 \succ l_1 \succ l_5 \succ l_3 \succ l_2$ & $0.00000$ & $(0.2287,0.2606)$ \\
$l_4 \succ l_5 \succ l_1 \succ l_3 \succ l_2$ & $0.00000$ & $(0.2606,0.2728)$ \\
$l_5 \succ l_4 \succ l_2 \succ l_3 \succ l_1$ & $0.32321$ & $(0.2728,0.2832)$ \\
$l_5 \succ l_2 \succ l_4 \succ l_3 \succ l_1$ & $0.00000$ & $(0.2832,0.3001)$ \\
$l_2 \succ l_5 \succ l_3 \succ l_4 \succ l_1$ & $0.41935$ & $(0.3001,1]$
\end{tabular}
\end{center}
\end{table}

\section{Omitted Details About Example 3}
For $S=\{y_1,y_2,y_3\}$, let $s=(s_y)_{y\in S}\in \mathcal{U}$ be a random search index with a continuous c.d.f. $F_{(s)}$ and 
$\mathbf{u}=\left(\mathbf{u}_{y}\right)_{y \in S} \in \mathcal{U}
$ be a fixed utility. The search index represents how soon an item will be considered, the larger the index that an item has, the sooner this item will be considered. So, for any $y_s$ and $y_z$, $y_s$ appears in the search before $y_z$ if and only if $s_{y_s}\geq s_{y_z}$. The distribution of the search indexes is independent from the preference orderings.

The fixed utility has the following ranking: $u_{y1}>u_{y2}>u_{y3}$, which means that $y_1\succ y_2\succ y_3$. 

$\tau (t)\in R$ denotes a threshold conditional on stopping time t. And for any $t'>t$, $\tau(t')$ has a first-order stochastic dominance over $\tau(t)$. We can view this as the people who stay longer are drawn from a distribution where they have higher thresholds.

Next I am going to show that in this search model with satisfaction, the accumulated attention does not increase with t for all consideration sets as long as the threshold $\tau(t)$ does not decrease during t.

Suppose for a given preference ordering $\succ:y_1\succ y_2\succ y_3$. For any consideration set A with cardinality $|A|=N$, denote $y_{A_1}\succ y_{A_2}\succ ...\succ y_{A_N}$, where $A_i$ represents the ith element in consideration set A.

1. For any $|A|=1$, the accumulated attention conditional on preference ordering $\succ$ in this framework equals to:
$\sum^A_{B\subseteq A}\mu(B \mid t)=\mu(A\mid t)=Pr(u_{y_{A_1}}\geq \tau(t))\times Pr(s_{y_{A_1}}\geq s_y, \forall y\in S-y_{A_1})$

2. For any $|A|=2$, the accumulated attention conditional on preference ordering $\succ$ in this framework equals:
\begin{equation*} \label{6}
\begin{split}
\sum^A_{B\subseteq A}\mu(B\mid t)=&\mu(\{y_{A_1},y_{A_2}\} \mid t)+\mu(\{y_{A_1}\} \mid t)+\mu(\{y_{A_2}\} \mid t)\\
=&Pr(u_{y_{A_1}}\geq \tau(t), u_{y_{A_2}}< \tau(t))\times Pr(s_{y_{A_2}}\geq s_{y_{A_1}}\geq s_y, \forall y\in S-y_{A_1}-s_{y_{A_2}})\\
&+Pr(u_{y_{A_2}}\geq \tau(t))\times Pr(s_{y_{A_2}}\geq s_y, \forall y\in S-y_{A_2})\\
&+Pr(u_{y_{A_1}}\geq \tau(t))\times Pr(s_{y_{A_1}}\geq s_y, \forall y\in S-y_{A_1})\\
=&Pr(u_{y_{A_1}}\geq \tau(t), u_{y_{A_2}}< \tau(t))\times Pr(s_{y_{A_2}}\geq s_{y_{A_1}}\geq s_y, \forall y\in S-y_{A_1}-s_{y_{A_2}})\\
&+Pr(u_{y_{A_2}}\geq \tau(t))\times Pr(s_{y_{A_2}}\geq s_y, \forall y\in S-y_{A_2})\\
&+Pr(u_{y_{A_1}}\geq \tau(t), u_{y_{A_2}}< \tau(t))\times Pr(s_{y_{A_1}}\geq s_y, \forall y\in S-y_{A_1})\\
&+Pr(u_{y_{A_1}}\geq \tau(t), u_{y_{A_2}}\geq \tau(t))\times Pr(s_{y_{A_1}}\geq s_y, \forall y\in S-y_{A_1})
\end{split}
\end{equation*}

Since the preference ordering is fixed, and $A_N$ by definition is the lowest ranked item in consideration set A, then: 
$$Pr(u_{y_{A_{N}}}\geq \tau(t))=Pr(u_{y_{i}}\geq \tau(t), \forall y_i\in A)$$
So
\begin{equation*} \label{7}
\begin{split}
\sum^A_{B\subseteq A}\mu(B\mid t)=&Pr(u_{y_{A_1}}\geq \tau(t), u_{y_{A_2}}< \tau(t))\times Pr(s_{y_{A_2}}\geq s_{y_{A_1}}\geq s_y, \forall y\in S-y_{A_1}-s_{y_{A_2}})\\
&+Pr(u_{y_{A_2}}\geq \tau(t))\times Pr(s_{y_{A_2}}\geq s_y, \forall y\in S-y_{A_2})\\
&+Pr(u_{y_{A_1}}\geq \tau(t), u_{y_{A_2}}< \tau(t))\times Pr(s_{y_{A_1}}\geq s_y, \forall y\in S-y_{A_1})\\
&+Pr(u_{y_{A_1}}\geq \tau(t), u_{y_{A_2}}\geq \tau(t))\times Pr(s_{y_{A_1}}\geq s_y, \forall y\in S-y_{A_1})\\
=&Pr(u_{y_{A_1}}\geq \tau(t), u_{y_{A_2}}< \tau(t))\times Pr(s_{y_{A_2}}\geq s_{y_{A_1}}\geq s_y, \forall y\in S-y_{A_1}-s_{y_{A_2}})\\
&+Pr(u_{y_{A_1}}\geq \tau(t), u_{y_{A_2}}\geq \tau(t))\times Pr(s_{y_{A_2}}\geq s_y, \forall y\in S-y_{A_2})\\
&+Pr(u_{y_{A_1}}\geq \tau(t), u_{y_{A_2}}< \tau(t))\times Pr(s_{y_{A_1}}\geq s_y, \forall y\in S-y_{A_1})\\
&+Pr(u_{y_{A_1}}\geq \tau(t), u_{y_{A_2}}\geq \tau(t))\times Pr(s_{y_{A_1}}\geq s_y, \forall y\in S-y_{A_1})\\
=&Pr(u_{y_{A_1}}\geq \tau(t), u_{y_{A_2}}< \tau(t))\times[Pr(s_{y_{A_2}}\geq s_{y_{A_1}}\geq s_y, \forall y\in S-y_{A_1}-s_{y_{A_2}})\\
&+Pr(s_{y_{A_1}}\geq s_y, \forall y\in S-y_{A_1})]\\
&+Pr(u_{y_{A_1}}\geq \tau(t), u_{y_{A_2}}\geq \tau(t))\times [ Pr(s_{y_{A_2}}\geq s_y, \forall y\in S-y_{A_2})\\
&+Pr(s_{y_{A_1}}\geq s_y, \forall y\in S-y_{A_1})]\\
\end{split}
\end{equation*}

Since $$Pr(u_{y_{A_1}}\geq \tau(t), u_{y_{A_2}}< \tau(t))+Pr(u_{y_{A_1}}\geq \tau(t), u_{y_{A_2}}\geq \tau(t))=Pr(u_{y_{A_1}}\geq \tau(t)),$$

We have      
\begin{equation*} 
\begin{split}
\sum^A_{B\subseteq A}\mu(B \mid t)=&Pr(u_{y_{A_1}}\geq \tau(t))\times [Pr(s_{y_{A_2}}\geq s_{y_{A_1}}\geq s_y, \forall y\in S-y_{A_1}-s_{y_{A_2}})+\\
&Pr(s_{y_{A_1}}\geq s_y, \forall y\in S-y_{A_1})]\\
&+Pr(u_{y_{A_2}}\geq \tau(t))\times [Pr(s_{y_{A_2}}\geq s_y, \forall y\in S-y_{A_2})\\
&-Pr(s_{y_{A_2}}\geq s_{y_{A_1}}\geq s_y, \forall y\in S-y_{A_1}-s_{y_{A_2}})]
\end{split}
\end{equation*}

In general, for all $A\not=S$ and $|A|=N$, the accumulated attention can be constructed as:
$$\sum^A_{B\subseteq A}\mu(B \mid t)=\sum_{i=1}^NPr(u_{y_{A_i}}\geq \tau(t))\times P_{y_{A_i}}$$
where $P_{y_{A_i}}$ represents the probability of some certain search ordering happening specifically correlating to $y_{A_i}$.

And when $A=S$, $$\sum^A_{B\subseteq A}\mu(B \mid t)=1.$$

Since $\tau(t')$ has a first-order stochastic dominance over $\tau(t)$ for all $t'>t$, $Pr(u_{y_{A_i}}\geq \tau(t)) \geq Pr(u_{y_{A_i}}\geq \tau(t'))$ for all items in the consideration set A. Therefore, the accumulated attention probability does not increase with t for all strict subsets of the full menu and equals one if $A=S$. Recall the monotonicity assumption in time:

For any $t<t'$, $A \subset S$,
$$
\sum_{B \subseteq A}^{A} \mu(B \mid t) \geqslant \sum_{B \subseteq A}^{A} \mu\left(B \mid t'\right)
$$
and
$$
\sum_{B \subseteq S}^{S} \mu(B \mid t) = \sum_{B \subseteq S}^{S} \mu\left(B \mid t'\right)
$$

The search model with satisficing behavior satisfies the monotonicity in time as long as the conditional threshold $\tau(t')$ has a first-order stochastic dominance over $\tau(t)$ for all $t'>t$.

\section{Time Monotonicity and Drifting Diffusion Model}
The drifting diffusion model (DDM) is a popular model that uses stopping time. While both the RAS and the DDM use stopping time, the two models have quite different angles from which they use stopping time.

In a classic DDM for value-based decisions, there are two items to choose from, denoted by $\{a,b\}$. Decision-makers must choose one item that they prefer, and their stopping time is noted. The DDM model assumes that decision-makers collect noisy information from both alternatives until they reach some threshold $\lambda>0$ at which they make a choice. The mean information accumulation  is denoted by $v(a)$ and $v(b)$, respectively. If $v(a)>v(b)$, then $a$ is preferred over $b$; otherwise, $b$ is preferred over $a$. Decision-makers are assumed to face instantaneous independent white noise fluctuations as modeled by uncorrelated Wiener processes $W_a$ and $W_b$ during the information accumulation. Evidence accumulation in favor of $a$ and $b$ is then represented by the two Brownian motions with drift $V_a(t) \equiv v(a) t+\sigma W_a(t)$ and $V_b(t) \equiv v(b) t+\sigma W_b(t)$. Thus, the net evidence is:

$$
Z_{a, b}(t) \equiv V_a(t)-V_b(t)=[v(a)-v(b)] t+\sigma \sqrt{2} W(t)
$$

With a threshold $\lambda$, a decision-maker stops evidence accumulation and makes their decision if $Z_{a, b}(t)$ reaches either $\lambda$ or $-\lambda$. The time a decision-maker first reaches either $\lambda$ or $-\lambda$ is called a stopping time in the DDM. 
$$
\mathrm{DT}_{a, b} \equiv \min \left\{t: Z_{a, b}(t)=\lambda \text { or } Z_{a, b}(t)=-\lambda\right\}
$$

The decision-maker will choose $a$ if $Z_{a, b}(t)$ reaches $\lambda$ first, and will choose b otherwise.
$$
\mathrm{DO}_{a, b} \equiv \begin{cases}a & \text { if } Z_{a, b}\left(\mathrm{DT}_{a, b}\right)=\lambda \\ b & \text { if } Z_{a, b}\left(\mathrm{DT}_{a, b}\right)=-\lambda\end{cases}
$$

There are fundamental differences in the underlying assumptions of the RAS and the DDM. The DDM has only two alternatives, and one alternative is chosen if the net evidence (with noisy evidence) is above a particular threshold. In contrast, the RAS has multiple alternatives, and the decision-maker will choose the most preferred option if it is in their consideration set. 

The combination of noise and a threshold in the DDM leads to the possibility of mistakes if the threshold is very low or if there is an unusually high level of noise. However, in the RAS, decision-makers are assumed to make the best choice among the alternatives that they actually consider. In the RAS, the assumption of time monotonicity is exogenous to decision-makers' attention, conditional on time, although the stopping time at the decision is endogenous. 

In general, in random attention models, researchers rule out the possibility of mistakes, and the choices decision-makers make are based only on their preferences and the consideration set at the time they make the decision. This difference leads the DDM and the RAS to have different focuses of study. The DDM focuses on choice accuracy, while RAS focuses on preference distribution.

I incorporate the idea of random attention into the DDM to illustrate the conceptual difference in the following example.
\begin{example}{(DDM with Random Attention).}
Suppose a menu has 2 items $S=\{a,b\}$. The decision-makers must choose one item they prefer, and the stopping time is collected. Decision-makers collect noisy information from both alternatives until they reach some threshold $\lambda>0$ and then make the choice.

Decision-makers have random attention and are following time monotonicity. At each time point, the probability that a decision-maker only sees item $a$ is $u(a|t)$ and the probability that a decision-maker only sees item $b$ is $u(b|t)$. A decision-maker's probability of seeing both $a$ and $b$ is $u(a,b|t)$.

According to time monotonicity, the probability of considering only one item decreases as time passes, and the probability of considering both items increases. That is:
For any $t<t'$,

$$
u(a|t) \geqslant u(a|t')
$$
$$
u(b|t) \geqslant u(b|t')
$$
and 
$$
u(a,b|t) \leqslant u(a,b|t')
$$

If the mean information accumulations are $v(a)$ and $v(b)$ ($a\succ b$ if $v(a)>v(b)$ and $b\succ a$ if $v(b)>v(a)$), then decision-makers are assumed to face instantaneous independent white noise fluctuations modeled by uncorrelated Wiener processes $W_a$ and $W_b$ during the information accumulations. 

Decision-makers cannot start evidence accumulation of an item until that item is in their consideration set. If time is discrete, then $V_a(1)=(1-u(b|t=1))v(a)+\sigma w_a(1)$, where $(1-u(b|t=1))$ is the probability that the decision-maker takes $a$ into consideration so they can accumulate evidence for $a$. $V_a(2)=V_a(1)+(1-u(b|t=2))v(a)+\sigma w_a(2)$ and thus $V_a(t)=v(a) (t-\sum_{0}^{t}u(b|t))+\sigma W_a(t)$.

If time is continuous, the evidence accumulation in favor of a and b is then represented by the two Brownian motions with drift $V_a(t) \equiv v(a) (t-\int_{0}^{t}u(b|t) dt)+\sigma W_a(t)$ and $V_b(t) \equiv v(b) (t-\int_{0}^{t}u(a|t) dt)+\sigma W_b(t)$. Thus, the net evidence is:

$$
Z_{a, b}(t) \equiv V_a(t)-V_b(t)=[v(a)-v(b)] t + v(b) \int_{0}^{t}u(a|t) dt - v(a) \int_{0}^{t}u(b|t) dt +\sigma \sqrt{2} W(t)
$$
 
\end{example}

In the DDM with random attention above, the attention rule enters the model exogenously. Moreover, if the probability of noticing an item is zero, then it is impossible that the net evidence reaches the threshold which is in favor of this item. This impossibility means decision-makers will not pick an item they did not notice at all. And if $t\rightarrow +\infty$, the net evidence converges to the item that the decision-maker prefers. While the DDM with random attention is unable to either predict choices or identify the preference anywhere in between due to the noise and the "satisfactory-like" threshold.

\section{Proof for Proposition \ref{lemma 1}}
\begin{proof}[Proof of Proposition \ref{lemma 1}]

The first step is to prove that (i) implies (ii). For a fixed ranking $\succ=\{i_1\succ i_2\succ i_3\succ ...\succ i_n\}$,
I consider from the lowest ordered item $i_n$, according to the RAS, the probability of choosing the last ordered item should be the probability of only considering that particular item:
$$\pi(i_n\mid t,\succ)=\mu(\{i_n\}\mid t,\succ)$$
$$\pi(i_n\mid t',\succ)=\mu(\{i_n\}\mid t',\succ)$$
For second last ranked item $i_{n-1}$
$$\pi(i_{n-1}\mid t,\succ)=\mu(\{i_{n-1}\}\mid t,\succ)+\mu(\{i_{n-1},i_n\}\mid t,\succ)$$
$$\pi(i_{n-1}\mid t',\succ)=\mu(\{i_{n-1}\}\mid t',\succ)+\mu(\{i_{n-1},i_n\}\mid t',\succ)$$
and so on.

Therefore,
$$\sum_{i_x\in A_y}\pi(i_x\mid t,\succ)=\sum_{B \subseteq A_y}^{A_y} \mu(B \mid  t,\succ)$$
According to monotonicity in time, it directly follows that
$$
\sum_{i_x \in A_y} \pi\left(i_x\mid  t^{\prime},\succ \right) \leqslant \sum_{i_x \in A_y} \pi(i_x\mid  t,\succ)
$$

The second step is to prove that (ii) implies (i). The proof is done by construction. Suppose we see a sequence $(i_1, i_2, i_3, ..., i_n)$ such that for all y, $t<t'$,
$$
\sum_{i_x \in A_y} \pi\left(i_x\mid  t^{\prime}, \succ \right) \leqslant \sum_{i_x \in A_y} \pi(i_x\mid  t, \succ)
$$
First, I claim that this is the ranking sequence $\succ=\{i_1\succ i_2\succ i_3\succ ...\succ i_n\}$. One can construct an attention rule such that for any k and t, $$\mu(A_k\mid  t, \succ)=\pi(a_k\mid  t, \succ)$$
and the probability of paying attention to any attention set other than this lower contour set form $A_i$, is zero.
Then the RAS representation survives as follows:

For any k
\begin{equation*} 
\begin{split}
\pi(a_k\mid t,\succ)&=\sum_{A \subseteq S} 1(a \text { is }-\succ \text { best in } A) \cdot \mu(A \mid t,\succ)\\
&=\mu(A_k\mid  t, \succ)+\sum_{A \subseteq S/A_k} 1(a \text { is }-\succ \text { best in } A) \cdot \mu(A \mid t,\succ)\\
&=\mu(A_k\mid  t, \succ)+0\\
&=\mu(A_k\mid  t, \succ)
\end{split}
\end{equation*}
Therefore if (ii) happens in the stochastic dataset, then there exists an attention rule matrix that satisfies the RAS representation. 
\end{proof}

\section{Proof for Theorem \ref{theorem 2}}
\begin{proof}[Proof of Theorem \ref{theorem 2}]

The direction from (i) to (ii) follows directly from the definition of RAS representation. If the stochastic choice dataset admits RAS representation with time monotonicity, then the true attention rule U should follow the time monotonicity, and each element in U and P should be non-negative and less than one. Which yields $$
U\cdot A \cdot P=\Pi
$$

The direction from (ii) to (i) is as follows. Since $V$ fits the monotonicity inequality restriction. Simply claim the $V$ and $P$ are the true attention rule following time monotonicity and preference distribution. The feature a) ensures V can be a probability measure for attention rule and $p'$ can be the preference distribution. According to equation \ref{eq 3},
$$
V\cdot A \cdot P=\Pi
$$
is a RAS representation.

\end{proof}
\section{Hit-and-Run Algorithm}\label{hit-run}
This paper uses "Hit-and-Run" algorithm \citep{Hit_run_2021} to simulate attention rule matrix $U$, which satisfies time monotonicity. 

For each preference ordering $\succ_i$, first generate a vector of probabilities that decision-makers are paying attention to each consideration set at the beginning $t=0$.
$$
U(S,\succ_i,t=0)=
\begin{bmatrix}
\mu(\{a\} \mid  t=0, \succ_i) & \mu(\{b\} \mid  t=0, \succ_i) &...& \mu(S \mid  t=0, \succ_i) \\
\end{bmatrix}
$$ 

Then the vector of accumulative attention can be calculated
$$
V(S,\succ_i,t=0)=
\begin{bmatrix}
\alpha(\{a\} \mid  t=0, \succ_i) & \alpha(\{b\} \mid  t=0, \succ_i) &...& \alpha(S \mid  t=0, \succ_i) \\
\end{bmatrix}
$$ 
where $\alpha(A\mid t,\succ_i)=\sum_{B \subseteq A}^{A} \mu(B\mid t,\succ_i)$.

In the choice dataset \citep{RUM_Victor_2021}, the outside option lottery o always stays in the menu. At $t=0$,  I only observe decision-makers choosing the outside option. Therefore in the empirical estimation, I assume that the outside option is always considered and only the outside option can be considered at $t=0$. Therefore there are 32 possible consideration sets (given that there are 5 lotteries and 1 outside option). This yields $\mu(\{l_o\} \mid  t=0, \succ_i)$, and therefore 
$$U(S,\succ_i,t=0)=
\begin{bmatrix}
1 & 0 & 0 &... & 0 & 0 \\
\end{bmatrix}_{1 \times 32}$$

According to time monotonicity, for any $t<t'$, $A \subset S$,
$$
\alpha(A\mid t,\succ_i) \geqslant \alpha(A\mid t',\succ_i)
$$
and
$$
\alpha(S\mid t,\succ_i) = 1
$$

This means an attention rule at $t=1$ satisfies time monotonicity only if the accumulative attention vector $V(S,\succ_i,t=1)-V(S,\succ_i,t=0)\leq 0$.

The idea behind the "hit-and-run" algorithm is (i) to pick some initial row $U(S,\succ_i,t=0)$; (ii) to construct a candidate row by moving along a random direction within the constrained set on a randomly chosen distance; (iii) to use a user-specified Monte-Carlo acceptance rule to assign to either the initial point $U(S,\succ_i,t=0)$ or the candidate point $U(S,\succ_i,t=1)$; (iv) to apply steps (ii) and (iii) to $U(S,\succ_i,t=1)$ to construct $U(S,\succ_i,t=2)$; (v) to repeat until the length of the chain reaches user chosen number (maximum of time periods).

To put it short, let $e^t=U(S,\succ_i,t)$. Take some arbitrary $e^t$ that already satisfied time monotonicity. Let $\xi$ be a direction vector. Thus, the candidate vector is
$$e^{t+1}=e^t+\gamma \xi$$
where $\gamma$ determines the scale of the perturbation $\gamma \xi$.

Sign constraints. I start with sign constraints to make sure the attention rule vector is the distribution of attention: $\mu(A_k \mid  t, \succ_i)$ for all $k\in K_c$, where $K_c$ is a set of indexes that correspond to $\mu(A_k \mid  t, \succ_i)$ in $e^t$. Hence, the constraints take the form
$$
\gamma \xi_k \geq-e_k^t, \quad \forall k \in K_c
$$
and
$$
\gamma \xi_k \leq 1-e_k^t, \quad \forall k \in K_c
$$

Define $K_{+}=\left\{k \in K_c: \xi_k>0\right\}$, $K_{-}=\left\{k \in K_c: \xi_k<0\right\}$ and $K_{0}=\left\{k \in K_c: \xi_k=0\right\}$. Then the sign constraints are:
$$
\begin{aligned}
& \frac{1-e_k^t}{\xi_k} \geq \gamma \geq-\frac{e_k^t}{\xi_k}, \quad \forall k \in K_{+} \\
& \frac{1-e_k^t}{\xi_k} \leq \gamma \leq-\frac{e_k^t}{\xi_k}, \quad \forall k \in K_{-}
\end{aligned}
$$
Note that the constraints that correspond to $k\in K_0$ are always satisfied since $1\geq e_k^t\geq 0$. Thus, the sign constraints can be simplified to
$$
\min _{k \in K_{-}}-\frac{e_k^t}{\xi_k} \geq \gamma \geq \max _{k \in K_{+}}-\frac{e_k^t}{\xi_k}
$$
and
$$
\min _{k \in K_{+}}\frac{1-e_k^t}{\xi_k} \geq \gamma \geq \max _{k \in K_{-}}\frac{1-e_k^t}{\xi_k}
$$

Time monotonicity constraints. Denote $$\Psi(t)=\begin{bmatrix}
\psi_1^t & \psi_2^t &...& \psi_K^t \\
\end{bmatrix}$$
where $K=||K_c||$ and $\psi_k^t=\sum_{A_x \subseteq A_k} \xi_k$. This $\psi_k^t$ can be considered as accumulative movement direction for the accumulative attention.

According to time monotonicity, for any $t<t'$, $A \subset S$,
$$
\alpha(A\mid t,\succ_i) \geqslant \alpha(A\mid t',\succ_i)
$$
Denote $\alpha_k^t=\alpha(A_k\mid t,\succ_i)$, the time monotonicity is 
$$
\alpha_k^{t+1}- \alpha_k^{t}\leq 0, \quad \forall k \in K_c
$$
Moreover, 
$$
\alpha_k^{t+1}=\alpha_k^{t}+\gamma \psi_k^t, \quad \forall k \in K_c
$$
Therefore,
$$
\gamma \psi_k^t \leq 0, \quad \forall k \in K_c
$$
Define $K^{\psi}_{+}=\left\{k \in K_c: \psi_k>0\right\}$, $K^{\psi}_{-}=\left\{k \in K_c: \psi_k<0\right\}$ and $K^{\psi}_{0}=\left\{k \in K_c: \psi_k=0\right\}$. Therefore time monotonicity indicates:
$$
\begin{aligned}
& \gamma \geq 0, \quad \forall k \in K^{\psi}_{-} \\
& \gamma \leq 0, \quad \forall k \in K^{\psi}_{+}
\end{aligned}
$$
Note that the time monotonicity indicates $\gamma=0$ unless all elements in the accumulative direction vector $\psi^t$ share the same sign. Therefore to operate the simulation process, the direction vector $\xi^t$ has to satisfy the condition such that all elements in accumulative direction $\psi^t$ have the same sign. Computationally, one can restrict the domain of k, and then the sign of $\gamma$ is automatically defined. Here I restrict that $k \in K^{\psi}_{-}$, and therefore $\gamma \geq 0$.

And the restrictions is as follows:
$$
\begin{aligned}
& \min _{k}-\frac{e_k^t}{\xi_k} \geq \gamma \geq 0, \quad \forall \xi^t \leq 0 \\
& \min _{k}\frac{1-e_k^t}{\xi_k} \geq \gamma \geq 0, \quad \forall \xi^t \geq 0 
\end{aligned}
$$

Therefore, the algorithm for each attention rule conditional on a preference $\succ_i$ generation process is as follows:

(i) Pick some initial row $e^{t}$; 

(ii) Generate a random direction vector $\xi^{t}$, which makes $\psi^{t}$ contain only elements that share the same sign or 0.

(iii) Given the direction vector $\xi^{t}$, the range for the scale of the perturbation $\gamma$ can be determined:
$$
\begin{aligned}
& \min _{k}-\frac{e_k^t}{\xi_k} \geq \gamma \geq 0, \quad \forall \xi^t \leq 0 \\
& \min _{k}\frac{1-e_k^t}{\xi_k} \geq \gamma \geq 0, \quad \forall \xi^t \geq 0 
\end{aligned}
$$

(iv) Randomly pick $\gamma$ within the range, and get the next row $e^{t+1}=e^t+\gamma \xi^t$

(v) Repeat until the length of the chain reaches the number of maximum time periods.

(vi) The attention rule conditional on a preference $\succ_i$ is generated. Denoted as $U(\succ_i)$

The attention rule under heterogeneous preference is simply the stack of all simulated conditional attention rules:
$$U^{sim}=[U(\succ_1), U(\succ_2), ... ,U(\succ_n)]$$

\section{A Simulation Exercise}\label{Simulation}
To show that simulation can cover the true attention rule and preference distribution if the number of simulations is large enough. I take the approach of simulation on attention rules. Specifically, I simulate a matrix of attention rules $U^{nsim}$ which satisfies time monotonicity. For example, in three items and two time periods case, the simulated attention matrix is as follows:
$$
U^{nsim}=
\begin{bmatrix}
\mu^{nsim}(a \mid  t_1, \succ_1) & \mu^{nsim}(b \mid  t_1, \succ_1) & ...& \mu^{nsim}(a,b,c \mid  t_1, \succ_6)\\
\mu^{nsim}(a \mid  t_2, \succ_1) & \mu^{nsim}(b \mid  t_2, \succ_1) & ...& \mu^{nsim}(a,b,c \mid  t_2, \succ_6)\\
\end{bmatrix}
$$
Each roll contains all attention rules conditional on 6 preference orderings (since there are 3 items) and different time periods ($t_1$ for roll 1 and $t_2$ for roll 2 in this case).

Denote the observed stochastic dataset as:
$$\Pi(S)=\begin{bmatrix}
\Pi(t_1\mid S)\\
\Pi(t_2\mid S)\\
\end{bmatrix}$$ 

Therefore the RAS representation can be stated as follows:
$$\Pi(S)=U*A*P^{diag}$$
where A is the projection matrix transforming attention rule U to choice probability conditional on preference, and $P^{diag}$ is the block diagonal matrix of preference distribution $P$.

This section performs a test showing with a large enough simulation on the attention rule, the true parameter will be captured.

There are 3 time periods and 3 items. And in each period, the probabilities of choosing each item are recorded.I set $P^*=\begin{bmatrix}
1/6 & 1/6 & 1/6 & 1/6 & 1/6 & 1/6)
\end{bmatrix}
$.

I randomly pick a truth attention rule $U^*$ and have the choice dataset $\Pi^*$ and they stay constant across all the simulation procedures.

The testing consists of two layers of simulation, I call it the inner layer (with N times simulation) and the outer layer (with T times of simulation).

For each outer layer $k=1,2,...,T$, I simulate N times of attention rules $U_{k,i}$, where $i=1,2,...,N$.  Define $d(U_{k,i}*A*P^{diag}, \Pi(S))$ as the Euclidean distance between the simulated $U_{k,i}*A*P^{diag}$ and the stochastic dataset $\Pi(S)$.

Define $d_k$ as the shortest distance between the simulation and the dataset across the inner layer simulations: $$d_k=Inf_i d(U_{k,i}*A*P^{diag}, \Pi(S)) $$ We check if the distance between the truth and simulated estimates are close enough $d_k \leq \delta$, here I pick $\delta=0.05$. If yes, then denote $c_k=1$, else $c_k=0$.

The estimate of probability that $P$ lies in the range of estimated distribution $\hat{P_k}$ is $\eta=\frac{\sum^T_{k=1}c_k}{T}$.

I pick $T=10000$, $N=10, 100, 1000, 10000$. Here I denote the probability that $P$ lies in the range of estimated distribution $\hat{P_k}$ when N=10 as $\eta_{10}$. 

The following table is the result:

$$
\begin{array}{cccc}
\eta_{10} & \eta_{100} & \eta_{1000} & \eta_{10000} \\
0.4528 & 0.7963 & 0.9021 & 0.9891
\end{array}
$$

Theoretically, and N goes to infinity, $\eta$ should converge to 1. Meaning that if we have a large enough simulation of attention rules, it will cover the true attention distribution.

This test also shows that with a large number of simulated attention rules, one can capture a range of preference distributions by simply solving the linear equation based on each simulated attention rule. 
\end{document}